\documentclass{IEEEtran}
\usepackage{graphicx}
\usepackage{subfigure}
\usepackage{amsmath}
\usepackage{amsthm}
\usepackage{amssymb}
\usepackage{mathrsfs}
\usepackage{slashbox}
\usepackage{float}
\usepackage[vlined,ruled,linesnumbered]{algorithm2e}
\usepackage{xcolor}
\usepackage{soul}
\usepackage{multirow}

\linespread{1}

\newtheorem{thm}{Theorem}

\newtheorem{lem}{Lemma}

\newtheorem{definition}{Definition}
\newtheorem{property}{Property}

\newcount\DraftStatus  
\DraftStatus=0  
\usepackage{color}
\definecolor{darkgreen}{rgb}{0,0.5,0}
\definecolor{purple}{rgb}{1,0,1}
\newcommand{\draftnote}[2]{\ifnum\DraftStatus=1s
	\marginpar{
		\tiny\raggedright
		\hbadness=10000
        \def\baselinestretch{0.8}
        \textcolor{#1}{\textsf{\hspace{0pt}#2}}}
     \fi}



\begin{document}
%
\title{One Parameter Defense - Defending against Data Inference Attacks via Differential Privacy}


\author{Dayong Ye, Sheng Shen, Tianqing Zhu*, Bo Liu and Wanlei Zhou
\thanks{*Tianqing Zhu is the corresponding author. D. Ye, S. Shen, T. Zhu and B. Liu are with the Centre for Cyber Security and Privacy and the School of Computer Science, University of Technology, Sydney, Australia, 2007. Wanlei Zhou is with the City University of Macau, Macau, PR China. Email: \{Dayong.Ye, 12086892, Tianqing.Zhu, Bo.Liu\}@uts.edu.au, wlzhou@cityu.mo.}}

\maketitle

\begin{abstract}
Machine learning models are vulnerable to data inference attacks, 
such as membership inference and model inversion attacks. 
In these types of breaches, an adversary attempts to infer a data record's membership in a dataset 
or even reconstruct this data record using a confidence score vector predicted by the target model. 
However, most existing defense methods only protect against membership inference attacks. 
Methods that can combat both types of attacks require a new model to be trained, 
which may not be time-efficient. 
In this paper, we propose a differentially private defense method 
that handles both types of attacks in a time-efficient manner 
by tuning only one parameter, the privacy budget. 
The central idea is to modify and normalize the confidence score vectors 
with a differential privacy mechanism which preserves privacy and 
obscures membership and reconstructed data. 
Moreover, this method can guarantee the order of scores in the vector  
to avoid any loss in classification accuracy. 
The experimental results show the method to be an effective and timely defense against 
both membership inference and model inversion attacks with no reduction in accuracy.
\end{abstract}

\section{Introduction}
On the back of the massive amounts of data we humans generate every day, 
machine learning (ML) has become a key part of many real-world applications, 
ranging from image classification to speech recognition \cite{Salem19}. 
However, these data often contain sensitive personal information 
that is vulnerable to a range of adversarial activities, including 
membership inference attacks \cite{Shokri17,Nasr19} and model inversion attacks \cite{Fred15,Yang19}. 
Both fall into the category of data inference attacks, 
which are launched by exploiting redundant information contained in confidence score vectors. 
For example, a machine learning model will usually be more confident in its prediction about 
a data record that is in its training dataset, 
over another data record that is not. 
Attackers can exploit this difference in confidence to determine 
whether a given data record is or not a member of the target model's training dataset. 
Here, a confidence score vector is a probability distribution over the possible classes predicted by an ML model. 
Each score in the vector indicates the model's confidence in a prediction of the corresponding class. 
The class with the largest confidence is predicted as the label of the input data record. 

Data inference attacks can result in severe privacy violations. 
For example, consider a model that has been trained on data 
collected from people with a certain disease. 
If a particular individual's data are known to be in the training dataset, 
the adversary can immediately infer that person's health status. 
Another example is model inversion attack. 
As shown in Fig. \ref{fig:Inversion}, an adversary can train an attack model to accurately reconstruct 
an input data record using only the confidence score vector, 
even if the data record is never seen by the attack model. 
Hence, defending against data inference attacks has been 
the focus of much attention in the privacy community. 
\begin{figure}[ht]
\centering
	\includegraphics[scale=0.45]{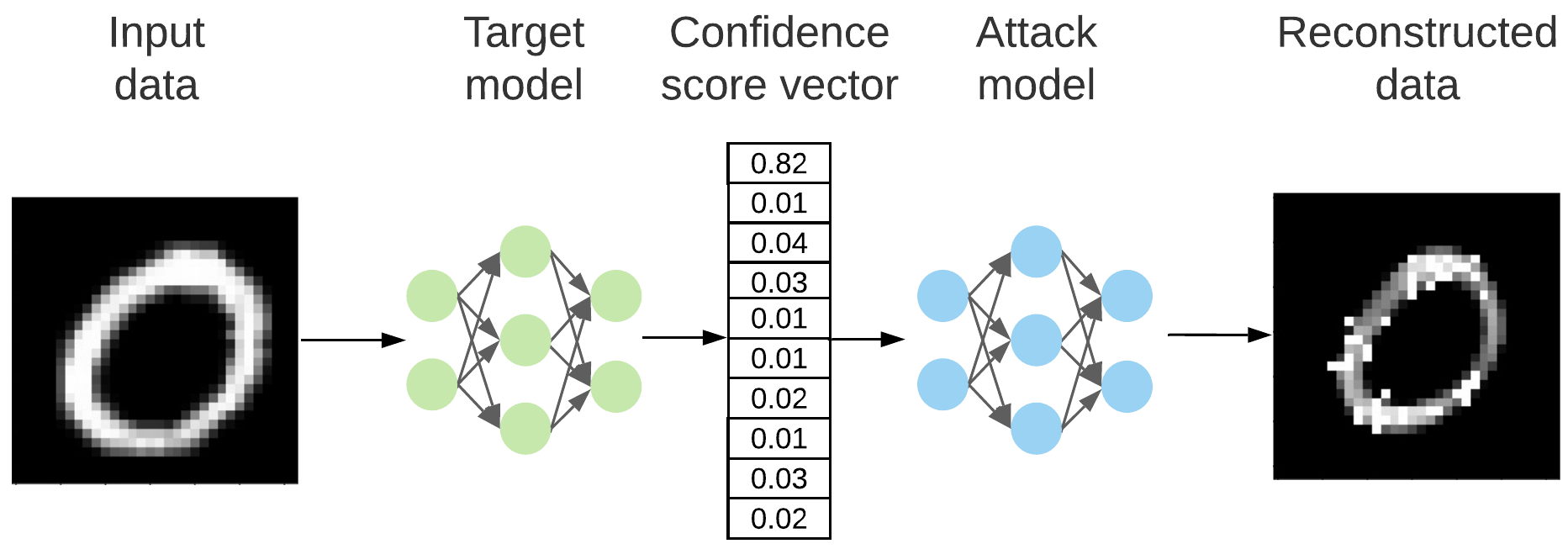}
	\caption{An input data record is accurately reconstructed 
	by an attack model which has never seen this data record.}
	\label{fig:Inversion}
\end{figure}

The methods proposed to date can be roughly classified into four categories 
based on the defense techniques employed. 
Those in the first category use regularization techniques to reduce overfitting, 
such as $L_2$ regularizer \cite{Shokri17}, dropout \cite{Salem19}, 
model-stacking \cite{Salem19} and min-max regularization \cite{Nasr18}. 
This is because overfitting is one of the major factors leading to the distinguishability 
between member and non-member data records \cite{Shokri17}. 
The shortcoming of these methods, however, is that distinguishability, 
i.e., a model's vulnerability, is not reduced directly. 
Moreover, they require retraining the target model 
which may not be very efficient for complex neural networks. 

The second category of the methods is based on adversarial examples \cite{Jia19}. 
These methods add a carefully crafted noise vector to a confidence score vector 
turning it into an adversarial example to mislead the attacker's classifier. 
These methods have formal utility loss guarantees of confidence score vectors. 
However, their effectiveness depends on the transferability of adversarial examples \cite{Papernot16}, 
which might not generally reduce the distinguishability of prediction scores \cite{Yang20}. 

The third category of the methods is based on deep neural networks \cite{Yang20}. 
These methods interpret defense goals as loss functions 
and train the deep neural networks as defense models to defend against attacks. 
These methods can minimize the content of information 
attackers can use to infer membership or reconstruct the data records. 
However, this strategy requires training new models 
which may not be very time-efficient. 

The fourth category describes the differential privacy methods \cite{Abadi16}. 
These methods carry a theoretical guarantee of privacy. 
However, as differential privacy is usually used during the training process, 
e.g., adding noise to gradients, 
there is always a large classification accuracy loss \cite{Nasr18}. 
As shown by Jayaraman and Evans \cite{Jaya19},  
existing differentially private machine learning methods 
rarely offer acceptable privacy-utility tradeoffs for complex models. 
Moreover, since existing methods have to be integrated into the training of target models, 
they are not applicable to those target models which have already been deployed.

A common limitation shared by the existing defense methods is that all but one only protect against membership inference attacks. 
The only method to handle both membership inference and model inversion attacks was proposed by Yang et al. \cite{Yang20}. 
Their method, however, requires training new models which may not be very time-efficient. 
In this paper, we develop a time-efficient defense method against both membership inference and model inversion attacks. 
Our solution is a differential privacy mechanism that modifies and normalizes the confidence score vectors to confuse the attacker's classifier. 
As such, the only parameter that needs to be tuned is the privacy budget, 
which controls the amount of perturbation added to the vector. 
For example, when a learning model is very confident in predicting a given data record, 
the output confidence score vector will have a very high probability of predicting one class 
and very low probabilities of predicting the others. 
By using our method to modify the confidence score vector, 
the probability distribution can be reshaped so that 
one class has a slightly higher probability for prediction than other classes. 
This new confidence score vector appears less confident than the original one 
and can be used to confuse the attacker's classifier. 

The idea of modifying confidence score vectors to defend against data inference attacks is not new. 
However, the methods based on this premise, such as those in \cite{Shokri17,Jia19}, do not provide a privacy guarantee 
plus they can only defend against membership inference attacks. 
By comparison, our method has a privacy guarantee, and offers protection against 
both membership inference and model inversion attacks. 
Further, our method preserves the order of scores in confidence score vectors, 
which guarantees zero classification accuracy loss. 
Moreover, as our method does not require training any new models, 
time-efficiency can be achieved. 
In summary, this paper makes the following contributions. 
\begin{itemize}
    \item We are the first to propose a one-parameter defense method that requires only one parameter to be tuned, the privacy budget. 
    This method guarantees both differential privacy and time-efficiency against both membership inference and model inversion attacks. 
    \item We theoretically demonstrate how to tune the privacy budget 
    to defend against both types of attacks, while 
    controlling the utility loss of confidence score vectors.
    \item We empirically show that 
    the presented method effectively mitigates both types of attacks 
    with no loss of classification accuracy, zero training time, and very low test time.
\end{itemize}

The rest of this paper is organized as follows. 
Section \ref{sec:related work} reviews the existing defense methods. 
Section \ref{sec:preliminaries} presents the preliminaries regarding 
data inference attacks and differential privacy. 
Our method is presented in detail in Section \ref{sec:method}. 
Section \ref{sec:analysis} theoretically analyzes the properties of our method and demonstrates how to use the method to defend against data inference attacks. 
The experimental results are provided in Section \ref{sec:experiments}. 
Finally, the paper concludes in Section \ref{sec:conclusion}.

\section{Related work}\label{sec:related work}
Defense methods can be roughly classified into four categories based on the adopted techniques: 
regularization, adversarial examples, deep neural networks and differential privacy. 

\subsection{Regularization-based defense methods}
Shokri et al. \cite{Shokri17} investigated four methods of defense 
against membership inference attacks. 
The first method is to restrict the prediction vector to the top $k$ classes, 
where a smaller $k$ means less information is leaked. 
The second method is to coarsen the precision of the confidence score vector 
by rounding the classification probabilities within the vector down to $d$ floating point digits. 
Again, a smaller $d$ means less information is leaked. 
The third method is to increase the entropy of the confidence score vector 
via a softmax function with a temperature $t$ to compute the output of the logits vector. 
The fourth method is to use $L_2$-norm standard regularization.

Nasr et al. \cite{Nasr18} formalized the interactions 
between their defense method and a membership inference attack as a min-max privacy game. 
To find the solution to the game, they train a target model using an adversarial process 
that minimizes both the prediction loss of the model and 
the maximum gain of the inference attacks. 
Through this approach, the target model provides both membership privacy 
and strong regularization capability.

Salem et al. \cite{Salem19} proposed two defense methods against membership inference attacks. 
The first method is called `dropout' 
which randomly deletes a fixed proportion of edges 
from a fully connected neural network model in each training iteration to avoid overfitting. 
The second method is called `model stacking', 
which is based on ensemble learning and 
constructs the target model using three different machine learning models. 
Two models are placed in the first layer to take the original training data 
while the third is trained with the conference score vectors of the first two models.
The idea of model stacking is to arrange multiple models in a hierarchy so as to avoid overfitting.

\subsection{Adversarial example-based defense methods}
Jia et al. \cite{Jia19} proposed a defense method named MemGuard 
that adds noise to each confidence score vector to make it an adversarial example. 
Their idea is based on the fact that 
deep learning models can be misled by adversarial examples to produce wrong predictions \cite{Goodfellow15,Papernot16SP}. 
They formalized the process of adding noise as an optimization problem 
and developed an algorithm to solve the problem based on gradient descent. 

\subsection{Deep neural network-based defense methods}
Yang et al. \cite{Yang20} designed a purifier model 
that takes a confidence score vector as input and 
reshapes it to meet defense goals. 
The purifier model consists of an encoder and a decoder. 
The encoder maps the confidence score vector predicted by the target model 
to a latent representation. 
The decoder then maps the latent representation to a reconstruction of the confidence score vector.

\subsection{Differential privacy-based defense methods} 
Differential privacy has been a prevalent tool to preserve the privacy of deep learning models \cite{Shokri15,Abadi16}. 
A comprehensive survey regarding differential privacy in deep learning can be found in \cite{Jaya19,Gong20}. 
To implement differentially private deep learning, 
noise can be added to one of the five places in a deep neural network: 
input datasets \cite{Heikkila17}, loss functions \cite{Zhao19}, gradients \cite{Abadi16,Cheng18}, 
weights of neural networks \cite{Jaya18,Phan19}, and output classes \cite{Papernot17,Papernot18}.

Heikkila et al. \cite{Heikkila17} proposed a general approach for 
a privacy-preserving learning schema for distributed settings. 
Their approach combines secure multiparty communication 
with differentially private Bayesian learning methods. 
In their approach, each client adds a Gaussian noise to the data and divides the noised data into shares. 
The shares are not divided independently. 
Instead, they are divided using a fixed-point representation of real numbers which allows exact cancellation of the noise in the sum.
Each share is then sent to a server. 
This way, the sum of the shares discloses the real value, 
but separately they are just random noise.

Zhao et al. \cite{Zhao19} proposed a privacy-preserving collaborative deep learning system. 
The system allows users to collaboratively build a collective learning model 
while only sharing the parameters, not the data. 
To preserve the private information embodied in the parameters, 
they developed a functional mechanism, an extended version of the Laplace mechanism, 
to perturb the objective function of the neural network.

Cheng et al. \cite{Cheng18} developed a privacy-preserving algorithm for distributed learning  
based on a leader-follower framework, 
where the leaders guide the followers in the right direction to improve their learning speed. 
For efficiency, communication is limited to leader-follower pairs. 
To preserve the privacy of the leaders, 
Gaussian noise is added to the gradients of the leaders' learning models.

Phan et al. \cite{Phan19} proposed a heterogeneous Gaussian mechanism 
to preserve privacy in deep neural networks. 
Unlike a regular Gaussian mechanism, 
this heterogeneous Gaussian mechanism can arbitrarily redistribute noise 
from the first hidden layer and the gradient of the model 
to achieve an ideal trade-off between model utility and privacy loss. 
To obtain the property of arbitrary redistribution, 
they introduce a noise redistribution vector that can be used  
to change the variance of the Gaussian distribution. 
Further, it can be guaranteed that, 
by adapting the values of the scores in the noise redistribution vector, 
more noise can be added to the more vulnerable components of the model to improve robustness and flexibility.

Papernot et al. \cite{Papernot17} developed a model called Private Aggregation of Teacher Ensembles (PATE)
which has been successfully applied to generative adversarial nets (GANs) for a privacy guarantee  \cite{Jordon19}. 
PATE consists an ensemble of $n$ teacher models; 
an aggregation mechanism; and a student model. 
Each teacher model is trained independently on a subset of private data.
To protect the privacy of the data labels, Laplace noise is added to the output classes, i.e., the teacher votes.
Last, the student model is trained through knowledge transfer from the teacher ensemble with the public data and privacy-preserving labels. 
Later, Papernot et al. \cite{Papernot18} improved the PATE model to make it applicable to large-scale tasks and real-world datasets. 

In addition to the utilization of standard differential privacy, local differential privacy has also been adopted recently. For example, Kim et al. \cite{Kim21} adopted Gaussian mechanism to preserve local differential privacy of user data in federated learning models. They also analyzed the trade-offs between user privacy, global utility and transmission rate, where a larger noise variance guarantees a stronger privacy with a lower utility bound and a higher transmission rate bound. 

\subsection{Discussion of related work}
Among all these defense methods, the regularization and 
adversarial example methods are only designed to defend against membership inference attacks. 
The differential privacy methods have mostly been developed for general privacy preservation 
rather than to protect against a specific inference attack. 
Only the deep neural network-based method \cite{Yang20} has been specifically developed to defend 
against both membership inference and model inversion attacks. 
That method, however, does require new deep neural networks to be trained. 
Hence, it is not an optimal strategy if time efficiency is a concern. 

Our strategy of modifying then normalizing the output confidence score vectors 
with a differential privacy mechanism provides protection against 
both membership inference and model inversion attacks. 
Unlike the deep neural network-based method in \cite{Yang20}, 
no new models need to be trained, so our method is very time-efficient. 
Plus, the order of scores in the confidence vectors is preserved, 
which skirts the accuracy tradeoff inherent to other differential privacy methods. 

\section{Preliminaries}\label{sec:preliminaries}
\subsection{Data inference attacks}
Data inference attacks can come in the form of either a membership inference or a model inversion attack, each of which has a different inference goal. 
In both of these attacks, the target model is evaluated by its owner but is open to public users, including attackers. The model can be accessed only in a black-box manner, where an attacker inputs a data sample into the model and receives the corresponding output. The real output confidence scores are modified by the owner while only the modified scores are published to the attacker. The modification process is integrated into the model and thus, is hidden to the attacker. Hence, the attacker cannot access any intermediate classification values, and he can only exploit the modified scores to invert the model or infer the membership of samples. Moreover, the structure and parameters of the target model are also unknown to the attacker. The attacker is aware of the distribution of the training dataset of the target model and can collect a new dataset based on the same distribution. The attacker, however, cannot directly query the training dataset of the target model. These assumptions regarding the attacker are the same as those made in \cite{Yang19}.

\subsubsection{Membership inference attacks}
In a membership inference attack, the goal is to infer whether a given data record is in the training dataset of the target model \cite{Nasr19}. 
With these types of attacks, the attacker has access to the model's prediction scores. 
Thus, the typical strategy for inferring the record's membership is to train an attack model which is usually a binary classifier. 
The input is either the confidence score vector of the given data record, or its label, or both. 
Then, the output is a prediction over the record's membership in the training dataset. 
Before training the attack model, the attacker has to train a shadow model on an 
auxiliary dataset drawn from the same data distribution as the training data 
of the target model as a way of replicating it \cite{Salem19}. 
The attack model is then trained on the confidence score vectors 
of membership/non-membership predicted by the shadow model.

Formally, let $F$ be the target model 
and $D^{\rm train}_{\rm target}$ be the private training dataset of the target model. 
$D^{\rm train}_{\rm target}$ contains a set of labeled data records $(\mathbf{x}^{(i)},y^{(i)})$, 
where $i$ denotes the index of the data record, 
$\mathbf{x}^{(i)}$ represents the data, and $y^{(i)}$ is the label of $\mathbf{x}^{(i)}$. 
The output of the target model $F(\mathbf{x}^{(i)})$ is a vectorized class of the data $\mathbf{x}^{(i)}$. 
Additionally, let $S$ be the shadow model and 
$D^{\rm train}_{\rm shadow}$ be the auxiliary training dataset of the shadow model 
which has the same data distribution as $D^{\rm train}_{\rm target}$, 
and let $G$ be the attack model. 
Given a data record $(\mathbf{x},y)$, the attacker inputs $\mathbf{x}$ into the target model $F$ 
and receives an output $F(\mathbf{x})$, 
and then feeds $F(\mathbf{x})$ into the attack model $G$ 
and receives an output $G(F(\mathbf{x}))$ 
which is the membership probability $Pr[(\mathbf{x},y)\in D^{\rm train}_{\rm target}]$ 
that the data record $(\mathbf{x},y)$ belongs to the ``\emph{in}'' class.

\subsubsection{Model inversion attacks}
Model inversion attacks aim to reconstruct the input data from the confidence score vectors predicted by the target model \cite{Fred15}. 
The attacker trains a separate attack model on an auxiliary dataset 
which acts as the inverse of the target model \cite{Yang19}. 
The attack model takes the confidence score vectors of the target model as input 
and tries to output the original input data of the target model.

Formally, let $F$, again, be the target model and $G$ be the attack model. 
Given a data record $(\mathbf{x},y)$, the attacker inputs $\mathbf{x}$ into $F$ and receives $F(\mathbf{x})$, 
and then feeds $F(\mathbf{x})$ into $G$ and receives $G(F(\mathbf{x}))$ 
which is expected to be very similar to $\mathbf{x}$, i.e., $G(F(\mathbf{x}))\approx \mathbf{x}$.

\subsection{Differential privacy}\label{sub:DP}
Differential privacy is a prevalent privacy model capable of guaranteeing that
any individual record being stored in or removed from a dataset makes little difference to the analytical output of the dataset \cite{Dwork14}. 
Differential privacy has been broadly applied in many research areas, 
including artificial intelligence \cite{Zhu20}, multi-agent systems \cite{Ye19,Ye20b,Ye20} and cyber security \cite{Ye21}. 
The concept is best described as follows. 
Two datasets $D$ and $D'$ are neighboring datasets if they differ by only one record. 
A query $f$ is a function that maps dataset $D$ to a range $\mathbb{R}$, $f: D\rightarrow\mathbb{R}$. 
The maximal difference between the results of query $f$ on $D$ and $D'$ 
is defined as the sensitivity $\Delta S$,
which determines how much perturbation is required for a privacy-preserving answer.
The formal definition of differential privacy is given as follows.

\begin{definition}[$\epsilon$-Differential Privacy \cite{Dwork14}]\label{Def-DP}
A mechanism $\mathcal{M}$ provides $\epsilon$-differential privacy for any pair of neighboring datasets $D$ and $D'$, and for every set of outcomes $\Omega$, if $\mathcal{M}$ satisfies:
\begin{equation}
Pr[\mathcal{M}(D) \in \Omega] \leq \exp(\epsilon) \cdot Pr[\mathcal{M}(D') \in \Omega]
\end{equation}
\end{definition}

\begin{definition}[Sensitivity \cite{Dwork14}]\label{Def-GS} For a query $f:D\rightarrow\mathbb{R}$, the sensitivity of $f$ is defined as
\begin{equation}
\Delta S=\max_{D,D'} ||f(D)-f(D')||_{1}
\end{equation}
\end{definition}

\begin{definition}[Private prediction interface \cite{Dwork18}]\label{def:PPI}
A prediction interface $\mathcal{M}$ is $\epsilon$-differentially private, if for every interactive query generating algorithm $Q$, the output $(Q\rightleftarrows\mathcal{M}(F))$ is $\epsilon$-differentially private with respect to model $F$, where $(Q\rightleftarrows\mathcal{M}(F))$ denotes the sequence of queries and responses generated in the interaction of $Q$ and $\mathcal{M}$ on model $F$.
\end{definition}

This definition shows that a private prediction interface can guarantee the privacy preservation of the interaction between queries and responses. This definition meets the property of our problem, where an attacker interacts with a prediction interface, i.e., our defense method, on a target model. If this interaction is differentially private, for each query of the attacker to the target model, he receives only an obfuscated response which breaks the relationship between the attacker's query and the corresponding response.



One of the most widely used differential privacy mechanisms is the exponential mechanism \cite{Zhu17} 
which defines a complex distribution over a large arbitrary domain.
\begin{definition}[The Exponential Mechanism \cite{Dwork14}]\label{Def-Ex}
The exponential mechanism $\mathcal{M}_E$ selects and outputs a score $r\in\mathcal{R}$
with probability proportional to $exp(\frac{\epsilon u(D,r)}{2\Delta u})$,
where $\epsilon$ is the privacy budget, $u(D,r)$ is the utility of a dataset and output pair,
and $\Delta u=\max\limits_{r\in\mathcal{R}} \max\limits_{D,D':||D-D'||_1\leq 1} |u(D,r)-u(D',r)|$
is the sensitivity of the utility score.
\end{definition}

\begin{figure*}[ht]
\centering
	\includegraphics[scale=0.5]{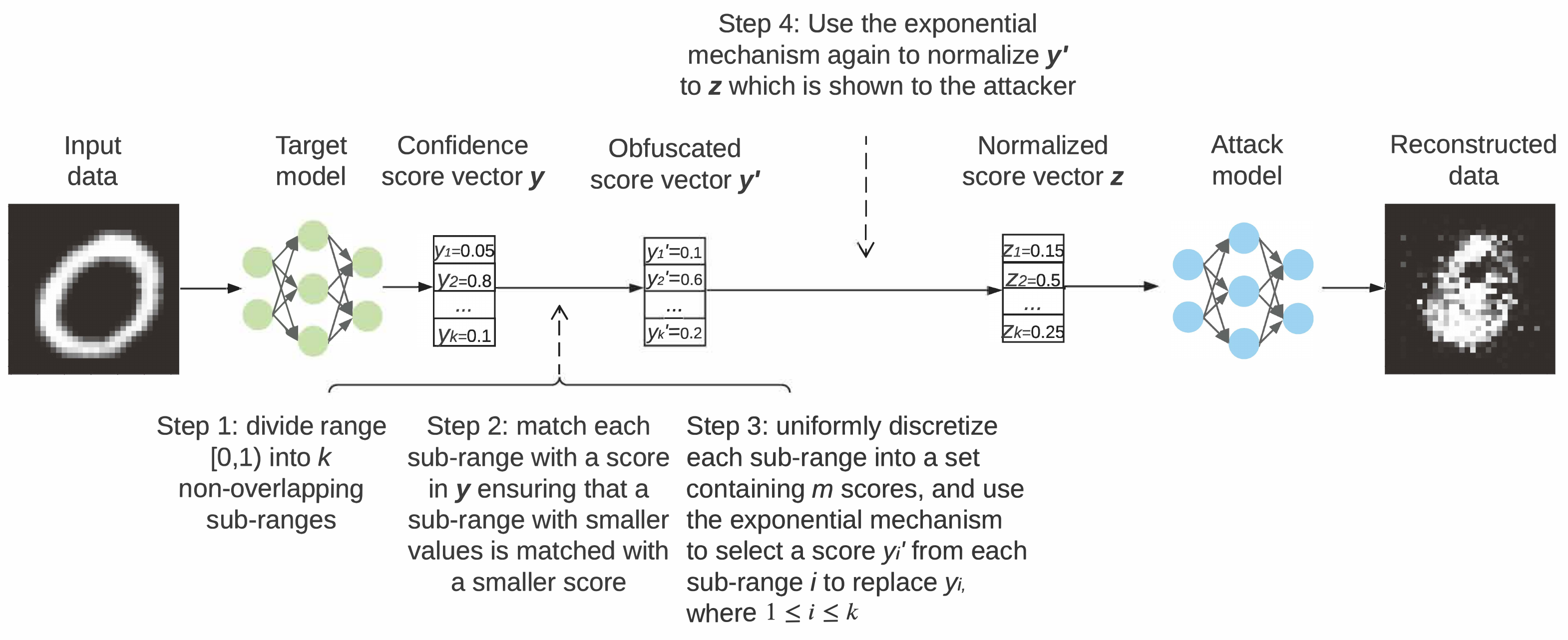}
	\caption{Overview of our method. The method consists of four steps. In Step 1, the range $[0,1)$ is divided into $k$ non-overlapping sub-ranges, where $k$ is the number of classes that the target model $T$ can classify. In Step 2, each sub-range is matched with a score in vector $\mathbf{y}$, where $\mathbf{y}$ is output by $T$, ensuring that a sub-range with smaller values is matched with a smaller score. In Step 3, each sub-range is uniformly discretized into a set with $m$ scores, where $m$ is a hyper-parameter. Then, for each sub-range $i$, the exponential mechanism is used to select a score from $m$ scores. The selected score is named $y'_i$ used to replace $y_i$. In Step 4, since it is likely that $\sum^k_{i=1}y'_i\neq 1$, the exponential mechanism is used again to normalize vector $\mathbf{y}'$ to $\mathbf{z}$ which is shown to the attacker.}
	\label{fig:example}
\end{figure*}

\section{The differentially private defense method}\label{sec:method}
\subsection{Overview of our method}
Our defense method, denoted as $\mathcal{M}_d$, is based on an exponential mechanism 
that perturbs the confidence score vector output of the target model $F(x)$. 
Let $\mathbf{y}=F(x)$ consist of $k$ scores: $\mathbf{y}=\langle y_1,...,y_k\rangle$. 
Fig. \ref{fig:example} displays the overview of our method which consists of two phases. 
The first phase involves three steps which focus on using an exponential mechanism to modify $\mathbf{y}$ to $\mathbf{y}'$. 
The second phase has one step to normalize $\mathbf{y}'$ to a valid confidence score vector $\mathbf{z}$ using the exponential mechanism again.

In the first phase, the range $[0,1)$ is divided into $k$ non-overlapping sub-ranges. 
This division is based on the original scores in vector $\mathbf{y}$ to ensure that 
each sub-range covers an original score and also, a sub-range with smaller values covers a smaller score. 
Specifically, suppose the scores in $\mathbf{y}$ have been ranked as $y_1\leq...\leq y_k$. 
Then, each sub-range $i$ is $[\frac{y_{i-1}+y_i}{2},\frac{y_i+y_{i+1}}{2})$, where $2\leq i\leq k-1$. 
For $i=1$, the sub-range is $[0,\frac{y_1+y_2}{2})$, and for $i=k$, the sub-range is $[\frac{y_{k-1}+y_k}{2},1)$.
Then, each sub-range is uniformly discretized to a set containing $m$ scores. 
The exponential mechanism is used to select a score $y'_i$ from the $m$ scores in each sub-range $i$ to replace $y_i$, where $1\leq i\leq k$. 
In this way, the order of the scores in $\mathbf{y}$ can be preserved in $\mathbf{y}'$ after replacement. 
As the user of the target model will select the predicted class with the highest score, our method can guarantee zero accuracy loss. 
In addition, the aim of discretizing each sub-range is to enable the use of an exponential mechanism to select a value. Since the exponential mechanism guarantees differential privacy, the attacker cannot deduce the real value from the selected value. By comparison, other selection methods, e.g., uniformly selection, may not guarantee differential privacy.

In the second phase, since the $k$ scores in $\mathbf{y}'$ may not sum to $1$, 
we use the exponential mechanism again to normalize them to form a valid confidence score vector $\mathbf{z}$. 
There are two reasons to use an exponential mechanism for normalization. First, with exponential normalization, the utility of vectors can be adjusted by choosing different $\epsilon$ values, which allows us to precisely control the difference in the content of information between a confidence score vector and its normalized version. This control, however, may not be achieved using other normalization methods, e.g., the softmax. The second reason is that by using the exponential normalization, only one parameter $\epsilon$ needs to be tuned to achieve both the differentially private modifications and the normalization, which matches the title of the paper: one parameter defense.

We use an example to explain how the method works and guarantees zero classification accuracy loss of the target model. 
Suppose we have a confidence score vector with two scores: $\mathbf{y}=\langle y_1=0.2,y_2=0.8\rangle$. 
First, we divide the range $[0,1)$ into $k=2$ non-overlapping sub-ranges 
based on the scores in $\mathbf{y}$ as $[0,\frac{0.2+0.8}{2})$ and $[\frac{0.2+0.8}{2},1)$.
Then, sub-range $[0,0.5)$ is matched with $y_1$, while sub-range $[0.5,1)$ is matched with $y_2$, recalling that a sub-range with smaller values is matched with a smaller score. After that, each sub-range is discretized to a set containing $m$ scores. Suppose $m$ is set to $5$, then, sub-range $[0,0.5)$ becomes $\{0,0.1,0.2,0.3,0.4\}$ and sub-range $[0.5,1)$ becomes $\{0.5,0.6,0.7,0.8,0.9\}$. For each score in $\mathbf{y}$, we use the exponential mechanism to select a score in the corresponding discretized sub-range to replace that score. Thus, for $y_1$, we use the exponential mechanism to select a score in $\{0,0.1,0.2,0.3,0.4\}$ to replace the score of $y_1$. Similarly, for $y_2$, the replacing score is selected in $\{0.5,0.6,0.7,0.8,0.9\}$. The detail of the selection will be given in Sub-section \ref{sub:modify}. Let the scores selected in $\{0,0.1,0.2,0.3,0.4\}$ and $\{0.5,0.6,0.7,0.8,0.9\}$ be $y_1'$ and $y_2'$, respectively, thus we have $y_1'<y_2'$. Hence, when we use $y_1'$ and $y_2'$ to replace $y_1$ and $y_2$, respectively, the order of the scores in $\mathbf{y}$ can be preserved in $\mathbf{y}'$ after replacement. This means that the highest score in $\mathbf{y}$, after replacement, is still the highest in $\mathbf{y}'$. Next, suppose the selected scores are $y_1'=0.4$ and $y_2'=0.9$. Note that $0.4+0.9\neq 1$. Then, we use the exponential mechanism again to normalize $\mathbf{y}'$. The aim of the normalization is to ensure that $\sum^k_{i=1} z_i=1$, where each $z_i$ is computed based on the scores in $\mathbf{y}'$. The detail of the normalization will be given in Sub-section \ref{sub:normalize}. This normalization is a requirement of machine learning classifiers, i.e., presenting normalized vectors to users. The normalization result is, say, $\mathbf{z}=\langle 0.3,0.7\rangle$, which is given to the attacker. In Section \ref{sec:analysis}, we will prove that the normalization can preserve the order of scores in $\mathbf{y}'$ in the normalized vector $\mathbf{z}$. As the predicted class is selected only with the highest score, our method can guarantee zero accuracy loss. 

\subsection{Phase 1: Modify the confidence score vector}\label{sub:modify}
\begin{algorithm}
\caption{Division and modification}
\label{alg:step1}
\textbf{Input}: A confidence score vector $\mathbf{y}=\langle y_1,...,y_k\rangle$;\\
\textbf{Output}: A modified confidence score vector $\mathbf{y}'=\langle y'_1,...,y'_k\rangle$;\\
Sort $y_1,...,y_k$ such that $y_1\leq...\leq y_k$;\\
Divide range $[0,1)$ into $k$ sub-ranges based on $y_1,...,y_k$: 
$[0,\frac{y_1+y_2}{2})$, $[\frac{y_1+y_2}{2},\frac{y_2+y_3}{2})$, ..., $[\frac{y_{k-1}+y_k}{2},1)$;\\
\For{$i=1$ to $k$}{
Uniformly discretize sub-range $[\frac{y_{i-1}+y_i}{2},\frac{y_i+y_{i+1}}{2})$ 
to $\{\frac{y_{i-1}+y_i}{2},\frac{y_{i-1}+y_i}{2}+\rho,...,\frac{y_{i-1}+y_i}{2}+(m-1)\rho\}$,
where $\rho=\frac{y_{i+1}-y_{i-1}}{2m}$ and $m$ is a positive integer representing the granularity of discretization;} 
\For{each $y_i$ from $y_1$ to $y_k$}{
Use the exponential mechanism to randomly select a value 
from the corresponding dicretized sub-range $i$ as the modified score, $y'_i$;\\
}
\end{algorithm}

The first phase of our method is formalized in Algorithm \ref{alg:step1}. 
In summary, the method takes a confidence score vector $\mathbf{y}$ as input 
and outputs a modified vector $\mathbf{y}'$. 
In Line 4, the range $[0,1)$ is divided into $k$ non-overlapping sub-ranges based on $y_1,...,y_k$.  
In Lines 5 and 6, each sub-range is uniformly discretized to a set containing $m$ scores. 
In Lines 7 and 8, the exponential mechanism is used to randomly select a score $y'_i$ in each sub-range $i$ to replace $y_i$. 
As described in Definition \ref{Def-Ex}, which score is selected in each sub-range $i$ is based on its utility. 
The utility of the $j^{th}$ score is then set to 
\begin{equation}\nonumber
\begin{aligned}
    u^i_j&=\frac{1}{|y_i-(\frac{i-1}{k}+(j-1)\rho)|}.\\
\end{aligned}
\end{equation}
Scores with a smaller difference to $y_i$ have a higher utility and thus a higher probability of being selected. 
Thus, in the sub-range $i$, the probability of selecting the $j^{th}$ score is proportional to $exp(\frac{\epsilon u^i_j}{2\Delta u})$. 

In Algorithm \ref{alg:step1}, we need to conduct $k$ samplings, i.e., selections, and the size of each sampling domain is $m$. To avoid a large computation overhead, a finite and small sampling domain is necessary \cite{Ganesh20}. To limit the size of the sampling domain, we can tune the value of $m$. In the experiments, we set $m=5$, i.e., the final result is selected from $5$ candidates using the exponential mechanism. The set of $m=5$ yields only a small sampling domain and does not introduce a large computation overhead.

\subsection{Phase 2: Normalizing the confidence score vector}\label{sub:normalize}
The modifications to the confidence score vector $\mathbf{y}$ in Phase 1 may result in an invalid probability distribution, 
i.e., where the sum of the scores in $\mathbf{y}'$ does not equal $1$. 
Hence, Phase 2 involves using the exponential mechanism again to normalize $\mathbf{y}'$. 
This procedure is as follows. 

Let the modified vector be $\mathbf{y}'=\langle y'_1,...,y'_k\rangle$. 
Then, $\mathbf{y}'$ can be perturbed into $\mathbf{z}$, 
where each $z_i$ is computed using the following equation.
\begin{equation}\label{eq:perturb}
    z_i=\frac{exp(\frac{\epsilon u(\mathbf{y}',y'_i)}{2\Delta u})}{\sum_{1\leq j\leq k}exp(\frac{\epsilon u(\mathbf{y}',y'_j)}{2\Delta u})}
\end{equation}
Theoretically, $u(\mathbf{y}',y'_i)$ can be set to any value which is positively correlated with $y'_i$, i.e., a larger $y'_i$ should be assigned a larger $u(\mathbf{y}',y'_i)$. Specifically, by setting $u(\mathbf{y}',y'_i)=y'_i$, i.e., setting the utility of each confidence score to the same value as the score, the sensitivity $\Delta u$ becomes $1$ which is easy for both theoretically analyzing the properties of our defense method and experimentally evaluating its performance.
Therefore, Equation \ref{eq:perturb} can be simplified to 
\begin{equation}\label{eq:perturb2}
    z_i=\frac{exp(\frac{\epsilon y'_i}{2})}{\sum_{1\leq j\leq k}exp(\frac{\epsilon y'_j}{2})}.
\end{equation}
The resulting vector $\mathbf{z}$ is the final output. 

\subsection{Discussion of the method}
In our method, the privacy protection is on confidence score vectors $\mathbf{y}$ rather than the original training dataset $D^{\rm train}_{\rm target}$. This is because 1) attackers in our problem are not allowed to directly access the training dataset, and 2) model inversion attacks are not against the training dataset, instead, these attacks aim to reconstruct any input data to the target model. The attackers can access the target model and exploit the prediction results of the target model to launch both membership inference and model inversion attacks. Therefore, our protection focuses on the prediction results of the target model, i.e., confidence score vectors.

Although our method is applied only to the prediction results of the target model, it can still defend against both membership inference and model inversion attacks. The attacker queries the target model $F$ by feeding a data record $\mathbf{x}$ to it, and expects to receive a response $F(\mathbf{x})$. However, by using our private prediction interface $\mathcal{M}_d$, the response $F(\mathbf{x})$ is obfuscated to $\mathcal{M}_d(F(\mathbf{x}))$. Since the existing attack methods must use $F(\mathbf{x})$ to launch the membership inference and model inversion attacks, altering $F(\mathbf{x})$ to $\mathcal{M}_d(F(\mathbf{x}))$ can significantly reduce the attack precision. This can be further explained by the fact that the attack model $G$ is trained using $F(\mathbf{x})$ as input, i.e., $G(F(\mathbf{x}))$, where for membership inference attacks, $G(F(\mathbf{x}))=Pr(\mathbf{x}\in D^{\rm train}_{\rm target})$, and for model inversion attacks, $G(F(\mathbf{x}))=\hat{x}$. Now, $F(\mathbf{x})$ is obfuscated to $\mathcal{M}_d(F(\mathbf{x}))$ which is used as input to the attack model $G$, i.e., $G(\mathcal{M}_d(F(\mathbf{x})))$. To achieve a precise attack, the attacker has to guarantee that $G(\mathcal{M}_d(F(\mathbf{x})))=G(F(\mathbf{x}))$, which means that $\mathcal{M}_d(F(\mathbf{x}))$ and $F(\mathbf{x})$ must be in the same class. However, there is no guarantee that an input can be in the same class as its obfuscated version with differentially private noise. Moreover, as $\mathcal{M}_d$ satisfies differential privacy, the attacker cannot deduce $F(\mathbf{x})$ from $\mathcal{M}_d(F(\mathbf{x}))$. In the experiments, we have also attempted to train the attack model $G$ using $\mathcal{M}_d(F(\mathbf{x}))$ as input, but this training did not converge.

\section{Properties of the defense method and how to use them to defend against attacks}\label{sec:analysis}
This section begins with the proof that our defense method satisfies differential privacy. 
Also, to maintain the differential privacy guarantee, a bound is set to limit the number of queries 
that an attacker can access the target model $F$ using the same input data record $\mathbf{x}$ (Sub-section \ref{sub:privacy}). 
Then, an analysis of tuning $\epsilon$ follows, which details the various properties of our method with different $\epsilon$ values (Sub-section \ref{sub:epsilon}). 
After that, we explain how to use these properties to defend against the data inference attacks (Sub-sections \ref{sub:defend inference} and \ref{sub:defend inversion}).

\subsection{Privacy analysis}\label{sub:privacy}
\begin{lem}[Post-processing theorem \cite{Dwork14}]\label{lem:post-processing}
Let $\mathcal{M}_1:Y\rightarrow Y'$ be a randomized algorithm 
that is $\epsilon$-differentially private. 
Let $\mathcal{M}_2:Y'\rightarrow Z$ be any mapping including deterministic functions. 
Then $\mathcal{M}_1\circ\mathcal{M}_2:Y\rightarrow Z$ is $\epsilon$-differentially private.
\end{lem}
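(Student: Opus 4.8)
The plan is to unwind the definition of $\epsilon$-differential privacy directly on the composed mechanism. Fix a pair of neighboring inputs $D,D'\in Y$ and an arbitrary measurable outcome set $T\subseteq Z$; writing $(\mathcal{M}_1\circ\mathcal{M}_2)(D)$ for $\mathcal{M}_2(\mathcal{M}_1(D))$, the goal is to establish $\Pr[(\mathcal{M}_1\circ\mathcal{M}_2)(D)\in T]\le \exp(\epsilon)\cdot\Pr[(\mathcal{M}_1\circ\mathcal{M}_2)(D')\in T]$, which is exactly the condition of Definition~\ref{Def-DP} for the composed map.

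First I would handle the case that $\mathcal{M}_2$ is deterministic, which is the clean one. Let $S=\mathcal{M}_2^{-1}(T)=\{y'\in Y':\mathcal{M}_2(y')\in T\}\subseteq Y'$ be the preimage of $T$. Then the event $\{(\mathcal{M}_1\circ\mathcal{M}_2)(D)\in T\}$ is precisely the event $\{\mathcal{M}_1(D)\in S\}$, so $\Pr[(\mathcal{M}_1\circ\mathcal{M}_2)(D)\in T]=\Pr[\mathcal{M}_1(D)\in S]$; applying $\epsilon$-differential privacy of $\mathcal{M}_1$ to the set $S$ gives $\Pr[\mathcal{M}_1(D)\in S]\le\exp(\epsilon)\Pr[\mathcal{M}_1(D')\in S]$, and the right-hand side equals $\exp(\epsilon)\Pr[(\mathcal{M}_1\circ\mathcal{M}_2)(D')\in T]$, which is the desired bound. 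This already covers the concrete use in our construction, since the normalization step of Phase~2 is a deterministic function of the modified vector $\mathbf{y}'$.

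For the general statement with a randomized $\mathcal{M}_2$, I would reduce to the deterministic case by conditioning on $\mathcal{M}_2$'s internal randomness. Writing $h(y')=\Pr[\mathcal{M}_2(y')\in T]\in[0,1]$, one has $\Pr[(\mathcal{M}_1\circ\mathcal{M}_2)(D)\in T]=\mathbb{E}[h(\mathcal{M}_1(D))]$ and likewise for $D'$, so it suffices to show $\mathbb{E}[h(\mathcal{M}_1(D))]\le\exp(\epsilon)\,\mathbb{E}[h(\mathcal{M}_1(D'))]$. In the discrete setting relevant to our method this is immediate: expand the expectation as $\sum_{y'}h(y')\Pr[\mathcal{M}_1(D)=y']$ and apply the privacy inequality term by term, so that the common factor $\exp(\epsilon)$ can be pulled out of the sum. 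In full generality one uses the layer-cake identity $\mathbb{E}[h(\mathcal{M}_1(D))]=\int_0^1\Pr[h(\mathcal{M}_1(D))>t]\,dt$ together with the differential-privacy bound applied to each superlevel set $\{y':h(y')>t\}$, and then pulls $\exp(\epsilon)$ outside the integral.

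The only real obstacle is this last lifting step — arguing that the per-set inequality of Definition~\ref{Def-DP} propagates to expectations of the bounded function $h$. For finite output spaces, which is exactly the situation here (Phase~1 draws from finitely many discretized candidates and Phase~2 is a function of that draw), it is the one-line termwise argument above; the layer-cake version is needed only to obtain the lemma in its full stated generality over arbitrary ranges. I would therefore present the discrete argument in the body and remark that the continuous case follows identically, or equivalently by observing that a randomized $\mathcal{M}_2$ is a mixture of deterministic maps and that the inequality in Definition~\ref{Def-DP} is preserved under convex combinations.
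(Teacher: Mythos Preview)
Your argument is correct and is the standard proof of post-processing closure for $\epsilon$-differential privacy; the deterministic case via preimages and the randomized extension via conditioning (or layer-cake) are exactly how this is usually done. Note, however, that the paper does not supply its own proof of this lemma: it is quoted from \cite{Dwork14} and used as a black box, so there is no in-paper argument to compare against.
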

Lemma \ref{lem:post-processing} states that the combination of a differentially private algorithm 
and a deterministic algorithm still guarantees differential privacy.

\begin{thm}\label{thm:DP}
Algorithm \ref{alg:step1} satisfies $(k\cdot\epsilon)$-differential privacy, where $\epsilon$ is the privacy budget and $k$ is 
the number of scores in a confidence score vector $\mathbf{y}$, i.e., the number of classes that the target model $F$ can classify.
\end{thm}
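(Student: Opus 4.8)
The plan is to decompose Algorithm~\ref{alg:step1} into a deterministic skeleton wrapped around $k$ calls to the exponential mechanism, and then combine three ingredients: the standard privacy guarantee of the exponential mechanism (Definition~\ref{Def-Ex}), the basic sequential composition theorem for differential privacy, and the post-processing theorem (Lemma~\ref{lem:post-processing}).

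First I would isolate the randomized part. Sorting $\mathbf{y}$, partitioning $[0,1)$ into the $k$ sub-ranges, and discretizing each sub-range into its $m$-point grid (Lines~3--6) are deterministic functions of the input, and relabelling the selected scores back into the original coordinate order at the end is also deterministic. The only randomness lies in Lines~7--8, where for each $i$ the score $y'_i$ is drawn from the grid of sub-range $i$ by an exponential mechanism with privacy budget $\epsilon$ and utility $u^i_j$ of sensitivity $\Delta u$. Hence Algorithm~\ref{alg:step1} has the shape $g_2\circ(\mathcal{E}_1,\dots,\mathcal{E}_k)\circ g_1$, with $g_1,g_2$ deterministic and each $\mathcal{E}_i$ an exponential mechanism.

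Next I would establish coordinate-wise privacy and then compose. By the exponential mechanism guarantee, releasing $y'_i$ with budget $\epsilon$ and sensitivity $\Delta u$ is $\epsilon$-differentially private, so each $\mathcal{E}_i$ is $\epsilon$-DP. The joint release $(y'_1,\dots,y'_k)$ is the sequential composition of $\mathcal{E}_1,\dots,\mathcal{E}_k$, hence $(k\epsilon)$-DP by basic sequential composition, and since $g_1$ and $g_2$ are deterministic, Lemma~\ref{lem:post-processing} shows that the full map $g_2\circ(\mathcal{E}_1,\dots,\mathcal{E}_k)\circ g_1$ remains $(k\epsilon)$-DP, which is the claim. (The same reasoning would further propagate the bound through the deterministic Phase~2 normalization in Eq.~\ref{eq:perturb2} to the full interface $\mathcal{M}_d$, though Theorem~\ref{thm:DP} asserts only the statement for Algorithm~\ref{alg:step1}.)

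The step I expect to be the main obstacle is making the coordinate-wise analysis airtight. One must pin down the neighboring relation on confidence vectors; I would take $\mathbf{y}\sim\mathbf{y}'$ to mean the two vectors are produced from datasets that differ in one record, so that all $k$ entries --- and therefore all $k$ sub-range grids --- may change at once. This is precisely why the bound is $k\epsilon$ rather than $\epsilon$: were only a single coordinate allowed to change, parallel composition would already give $\epsilon$. A further subtlety is that the discretization grid of sub-range $i$ depends on $\mathbf{y}$; I would handle this either by working with the data-independent grid $\{\tfrac{i-1}{k}+(j-1)\rho\}$ that is implicit in the formula defining $u^i_j$, or by folding the grid construction into $g_1$ and checking that the exponential-mechanism bound survives because $g_1$ is applied identically to $\mathbf{y}$ and $\mathbf{y}'$. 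Finally I would verify that $\Delta u$ is a genuine finite upper bound on $\max_r\max_{\mathbf{y}\sim\mathbf{y}'}|u^i(\mathbf{y},r)-u^i(\mathbf{y}',r)|$ for the chosen utility, truncating $u^i_j$ if it can blow up near a grid point; such a truncation does not affect the composition argument.
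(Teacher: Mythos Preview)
Your high-level architecture matches the paper's proof: both view Algorithm~\ref{alg:step1} as $k$ invocations of the exponential mechanism followed by deterministic processing, invoke the $\epsilon$-DP guarantee for each invocation, compose sequentially to $(k\epsilon)$-DP, and then apply Lemma~\ref{lem:post-processing} to carry the bound through the deterministic Phase~2 normalization.

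The one substantive divergence is in how the data-dependent output grids are handled --- exactly the obstacle you flagged. You propose either falling back to a data-independent grid or absorbing the grid construction into a preprocessing map $g_1$. The paper instead resolves it by \emph{restricting the neighboring relation}: it declares $\mathbf{y}$ and $\hat{\mathbf{y}}$ neighbors only when $\|\mathbf{y}-\hat{\mathbf{y}}\|_1\le 1$ \emph{and} $y_i+y_{i+1}=\hat{y}_i+\hat{y}_{i+1}$ for all $i$, which forces the sub-range endpoints (hence the discretized output sets $\mathcal{R}$) to coincide for both vectors. With identical $\mathcal{R}$ the exponential-mechanism ratio bound goes through verbatim per coordinate. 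This buys a clean proof at the cost of a narrow, somewhat artificial notion of adjacency; your data-independent-grid alternative would give a more conventional neighboring relation but requires reconciling the stated algorithm (whose grids are data-dependent) with the utility formula $u^i_j$ (which, as you noticed, references the fixed points $(i-1)/k$). Your second alternative --- folding the grid into $g_1$ --- does not work as stated, since pre-processing the private input is not privacy-preserving and $g_1(\mathbf{y})\neq g_1(\hat{\mathbf{y}})$ in general; the paper's constraint is precisely what forces $g_1(\mathbf{y})=g_1(\hat{\mathbf{y}})$.
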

\begin{proof}
In Line 8 of Algorithm \ref{alg:step1} (the first phase of our method), the exponential mechanism is used $k$ times 
to randomly select $k$ scores to replace $y_1,...,y_k$, i.e., to perform the mapping from $\mathbf{y}$ to $\mathbf{y}'$. 
According to Definition \ref{Def-Ex}, exponential mechanism defines a utility function $u:\mathbb{N}^{|\mathcal{X}|}\times\mathcal{R}\rightarrow\mathbb{R}$, which maps dataset-output pairs to utility scores. 
In our problem, for each sub-range $i$: $[\frac{y_{i-1}+y_i}{2},\frac{y_i+y_{i+1}}{2})$, when $2\leq i\leq k-1$, 
and $[0,\frac{y_1+y_2}{2})$ and $[\frac{y_{k-1}+y_k}{2},1)$ when $i=1$ and $i=k$, respectively, 
by applying Definition \ref{Def-Ex}, we interpret dataset $D$ as a confidence score vector $\mathbf{y}$, 
and interpret output set $\mathcal{R}$ as $\mathcal{R}=\{\frac{y_{i-1}+y_i}{2},\frac{y_{i-1}+y_i}{2}+\rho,...,\frac{y_{i-1}+y_i}{2}+(m-1)\rho\}$. 
In addition, we interpret a neighboring dataset $D'$ as another confidence score vector $\hat{\mathbf{y}}$, 
where $||\hat{\mathbf{y}}-\mathbf{y}||_1\leq 1$ and $y_i+y_{i+1}=\hat{y}_i+\hat{y}_{i+1}$, where $1\leq i\leq k-1$. 
Then, $\mathbf{y}$ and $\hat{\mathbf{y}}$ have the same output set $\mathcal{R}$. 

Let the probability of selecting the $j^{th}$ score $r_j$ from $\mathcal{R}$ with vector $\mathbf{y}$ be $P(r_j|\mathbf{y})$ 
and the probability of selecting $r_j$ from $\mathcal{R}$ with a neighboring vector $\hat{\mathbf{y}}$ be $P(r_j|\hat{\mathbf{y}})$. 
Also, let $u^i(\mathbf{y},r_j)$ and $u^i(\hat{\mathbf{y}},r_j)$ be the utility of of selecting $r_j$ 
from $\mathcal{R}$ with $\mathbf{y}$ and $\hat{\mathbf{y}}$, respectively. 
Then, we have $\frac{P(r_j|\mathbf{y})}{P(r_j|\hat{\mathbf{y}})}=\Big[\frac{exp(\frac{\epsilon u^i(\mathbf{y},r_j)}{2\Delta u})}{\sum_{r_{j'}\in\mathcal{R}}exp(\frac{\epsilon u^i(\mathbf{y},r_{j'})}{2\Delta u})}\Big]\Big/\Big[\frac{exp(\frac{\epsilon u^i(\hat{\mathbf{y}},r_j)}{2\Delta u})}{\sum_{r_{j'}\in\mathcal{R}}exp(\frac{\epsilon u^i(\hat{\mathbf{y}},r_{j'})}{2\Delta u})}\Big]$. Based on the knowledge of exponential mechanism \cite{Dwork14}, we have $\frac{P(r_j|\mathbf{y})}{P(r_j|\hat{\mathbf{y}})}\leq exp(\epsilon)$.

By symmetry, $\frac{P(r_j|\mathbf{y})}{P(r_j|\hat{\mathbf{y}})}\geq exp(-\epsilon)$. 
Therefore, in each sub-range $i$, our method satisfies $\epsilon$-differential privacy. 
Since there are $k$ sub-ranges and the exponential mechanism is used in each sub-range, 
Algorithm \ref{alg:step1} gives $(k\cdot\epsilon)$-differential privacy. 

The second phase of our method is deterministic, mapping a differentially private vector $\mathbf{y}'$ to $\mathbf{z}$. 
Therefore, according to Lemma \ref{lem:post-processing}, 
the combination of the first and second phases still guarantee $(k\cdot\epsilon)$-differential privacy. 
\end{proof}

\begin{thm}\label{thm:PPI}
The defense method, $\mathcal{M}_d$, is a $(k\cdot\epsilon)$-differentially private prediction interface.
\end{thm}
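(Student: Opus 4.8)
The plan is to derive Theorem \ref{thm:PPI} directly from Theorem \ref{thm:DP} and Definition \ref{def:PPI}, viewing the prediction interface $\mathcal{M}_d$ as the wrapper that, on each query $\mathbf{x}$, internally forms the confidence score vector $\mathbf{y}=F(\mathbf{x})$ and returns $\mathbf{z}=\mathcal{M}_d(\mathbf{y})$ (Phase 1 followed by the Phase 2 normalization). By Definition \ref{def:PPI} it suffices to show that for \emph{every} interactive query-generating algorithm $Q$, the transcript $(Q\rightleftarrows\mathcal{M}_d(F))$ — the full sequence of queries $\mathbf{x}_1,\mathbf{x}_2,\dots$ issued by $Q$ and responses $\mathbf{z}_1,\mathbf{z}_2,\dots$ returned by the interface — is $(k\cdot\epsilon)$-differentially private with respect to $F$. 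The base case is a single query: the response is exactly one run of Algorithm \ref{alg:step1} followed by the deterministic normalization, so Theorem \ref{thm:DP} already shows that a single response is a $(k\cdot\epsilon)$-differentially private function of the underlying confidence score vector, hence of $F$.

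The second step handles the adaptive, interactive nature of $Q$. Each query $\mathbf{x}_t$ is chosen by $Q$ as a (possibly randomized) function of the responses $\mathbf{z}_1,\dots,\mathbf{z}_{t-1}$ already observed; since those responses are themselves differentially private and the choice of the next query is a post-processing of them, generating queries spends no additional budget (Lemma \ref{lem:post-processing}). What does consume budget is issuing the \emph{same} input record to $F$ repeatedly, since this yields fresh, independent draws of the exponential mechanism on the same vector $\mathbf{y}$; this is precisely why the query bound of Sub-section \ref{sub:privacy} is imposed. Its role is to confine all privacy expenditure associated with any single input record, so that — together with the fact that distinct input records induce distinct confidence vectors playing the role of disjoint ``individuals'' — the transcript does not exceed the $(k\cdot\epsilon)$ budget granted by Theorem \ref{thm:DP}. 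Combining the per-query guarantee of Theorem \ref{thm:DP}, the post-processing lemma (for query generation and for the Phase 2 normalization), and this composition argument gives that $(Q\rightleftarrows\mathcal{M}_d(F))$ is $(k\cdot\epsilon)$-differentially private for every $Q$, which is exactly the condition in Definition \ref{def:PPI}.

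The main obstacle I anticipate is the composition bookkeeping: one must argue carefully that the adaptivity of $Q$ does not inflate the bound beyond $k\cdot\epsilon$ (the closure-under-adaptive-queries property underlying the prediction-interface framework of \cite{Dwork18}), and that the query-limiting mechanism genuinely restricts budget expenditure to one record at a time rather than allowing it to accumulate across the whole transcript. A secondary subtlety is pinning down the ``with respect to model $F$'' quantifier: I would reuse the neighboring convention fixed in the proof of Theorem \ref{thm:DP} — neighboring vectors $\mathbf{y},\hat{\mathbf{y}}$ with $\|\hat{\mathbf{y}}-\mathbf{y}\|_1\leq 1$ and matching adjacent sums — and verify it lifts consistently from the vector level to the interface level, so that the privacy statement about $F$ is meaningful and matches what Theorem \ref{thm:DP} actually establishes.
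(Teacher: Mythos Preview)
Your core approach matches the paper's: invoke Theorem \ref{thm:DP} for the per-query guarantee and then appeal to Definition \ref{def:PPI}. However, the paper's proof is far more terse than your plan --- it is essentially two sentences: Theorem \ref{thm:DP} shows each response $\mathbf{y}$ is obfuscated in a $(k\cdot\epsilon)$-differentially private manner, so by Definition \ref{def:PPI} the query/response sequence is $(k\cdot\epsilon)$-DP and $\mathcal{M}_d$ is a $(k\cdot\epsilon)$-DP prediction interface. The paper does \emph{not} engage with adaptive composition bookkeeping, does not invoke the post-processing lemma for query generation, and in particular does \emph{not} use the query bound in this proof; that bound appears only afterward as the separate Theorem \ref{thm:bound}, motivated precisely because repeated queries on the same record would otherwise erode the guarantee just asserted. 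So your extra machinery is more rigorous than what the paper supplies, but it is not part of the paper's own argument for this theorem, and your invocation of the query-limiting mechanism anticipates material the paper only introduces later.
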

\begin{proof}
Theorem \ref{thm:DP} shows that for each input query $\mathbf{x}$ of the target model $F$, 
the defense method $\mathcal{M}_d$ can obfuscate the corresponding response, i.e., output vector $\mathbf{y}$, in a differentially private manner. 
According to Definition \ref{def:PPI}, as the sequence of queries and responses satisfies $(k\cdot\epsilon)$-differential privacy, 
the defense method $\mathcal{M}_d$ is a $(k\cdot\epsilon)$-differentially private prediction interface.
\end{proof}

Theorems \ref{thm:DP} and \ref{thm:PPI} and the accompanying proof demonstrate that 
our defense method $\mathcal{M}_d$ provides a privacy guarantee. 
By giving differential privacy guarantee, an attacker cannot deduce the original vector $\mathbf{y}$ from the perturbed vector $\mathbf{z}$. 
As shown in the experiments, based on $\mathbf{z}$, the attacker can neither precisely reconstruct the input data record $\mathbf{x}$ 
nor successfully infer whether $\mathbf{x}$ is in the training set of the target model $F$. 
However, if an attacker uses the same input data record $\mathbf{x}$ to access the target model $F$ multiple times, 
the attacker may deduce the original vector $\mathbf{y}$ by observing the perturbed vectors. 
This is because in differential privacy, a privacy budget is used to control the privacy level. 
Every time an original vector is perturbed and released, the privacy budget is partially consumed. 
Once the privacy budget is used up, differential privacy cannot guarantee the privacy of the original vector anymore.
To guarantee the privacy level of an original vector, a bound must be set on the number of times 
that a user can access the target model using the same data record. 
The detailed computation of the bound is as follows. 


\begin{definition}[KL-Divergence \cite{Dwork14}]\label{Def:KL}
The KL-Divergence between two random variables $Y$ and $Z$ taking values from the same domain is defined to be:
\begin{equation}
D(Y||Z)=\mathbb{E}_{y\sim Y}\left[ln\frac{Pr(Y=y)}{Pr(Z=y)}\right].
\end{equation}
\end{definition}

\begin{definition}[Max Divergence \cite{Dwork14}]\label{Def:Max}
The Max Divergence between two random variables $Y$ and $Z$ taking values from the same domain is defined to be:
\begin{equation}
D_{\infty}(Y||Z)=\max\limits_{S\subseteq Supp(Y)}\left[ln\frac{Pr(Y\in S)}{Pr(Z\in S)}\right].
\end{equation}
\end{definition}

\begin{lem}[\cite{Dwork14}]\label{lem:DPDivergence}
A mechanism $\mathcal{M}$ is $\epsilon$-differentially private if and only if on every two neighboring datasets $x$ and $x'$,
$D_{\infty}(\mathcal{M}(x)||\mathcal{M}(x'))\leq\epsilon$ and $D_{\infty}(\mathcal{M}(x')||\mathcal{M}(x))\leq\epsilon$.
\end{lem}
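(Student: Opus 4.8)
The plan is to prove the two implications separately by directly unwinding Definition \ref{Def-DP} (of $\epsilon$-differential privacy) and Definition \ref{Def:Max} (of max divergence); the lemma is essentially the observation that these two definitions say the same thing, once the role of the support is handled with care.

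For the ``only if'' direction, I would assume $\mathcal{M}$ is $\epsilon$-differentially private and fix an arbitrary ordered pair of neighboring datasets $x,x'$. For every event $S \subseteq \mathrm{Supp}(\mathcal{M}(x))$, Definition \ref{Def-DP} gives $Pr[\mathcal{M}(x)\in S] \le \exp(\epsilon)\cdot Pr[\mathcal{M}(x')\in S]$; moreover, whenever $Pr[\mathcal{M}(x)\in S]>0$ this forces $Pr[\mathcal{M}(x')\in S]>0$, so the ratio is well defined and $\ln\frac{Pr[\mathcal{M}(x)\in S]}{Pr[\mathcal{M}(x')\in S]} \le \epsilon$. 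Taking the supremum over all such $S$ yields $D_{\infty}(\mathcal{M}(x)\|\mathcal{M}(x')) \le \epsilon$. Since the pair $(x',x)$ is also neighboring, applying the same argument with the roles of $x$ and $x'$ exchanged gives $D_{\infty}(\mathcal{M}(x')\|\mathcal{M}(x)) \le \epsilon$.

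For the ``if'' direction, I would assume both max-divergence bounds hold for every neighboring pair and recover the differential privacy inequality for an arbitrary (measurable) outcome set $\Omega$. The key reduction is to intersect $\Omega$ with the support: setting $S = \Omega \cap \mathrm{Supp}(\mathcal{M}(x))$ gives $Pr[\mathcal{M}(x)\in\Omega] = Pr[\mathcal{M}(x)\in S]$ and, by monotonicity of probability, $Pr[\mathcal{M}(x')\in\Omega] \ge Pr[\mathcal{M}(x')\in S]$. The hypothesis $D_{\infty}(\mathcal{M}(x)\|\mathcal{M}(x')) \le \epsilon$ then gives $Pr[\mathcal{M}(x)\in S] \le \exp(\epsilon)\cdot Pr[\mathcal{M}(x')\in S] \le \exp(\epsilon)\cdot Pr[\mathcal{M}(x')\in\Omega]$, i.e. $Pr[\mathcal{M}(x)\in\Omega] \le \exp(\epsilon)\cdot Pr[\mathcal{M}(x')\in\Omega]$; the symmetric bound $D_{\infty}(\mathcal{M}(x')\|\mathcal{M}(x)) \le \epsilon$ handles the reversed ordered pair, so every ordered pair of neighbors satisfies Definition \ref{Def-DP}.

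The only real subtlety — the step I would flag as the main obstacle — is the bookkeeping around the support and zero-probability events: one must observe that a finite max divergence already implies that $\mathcal{M}(x')$ is absolutely continuous with respect to $\mathcal{M}(x)$ on $\mathrm{Supp}(\mathcal{M}(x))$ (otherwise some $S$ would make the log-ratio $+\infty>\epsilon$), which is precisely what guarantees no division by zero above and makes the passage from $\Omega$ to $S\subseteq\mathrm{Supp}(\mathcal{M}(x))$ lossless on the numerator. Everything else is a mechanical application of monotonicity of the probability measure and of the logarithm.
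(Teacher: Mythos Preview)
Your argument is correct and is exactly the standard unwinding of Definitions~\ref{Def-DP} and~\ref{Def:Max}; the support/zero-probability bookkeeping you flag is the only nontrivial point and you handle it properly. Note, however, that the paper does not supply its own proof of this lemma: it is quoted verbatim from~\cite{Dwork14} and used as a black box in the proof of Theorem~\ref{thm:bound}, so there is nothing in the paper to compare against beyond the citation.
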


\begin{lem}[\cite{Dwork14}]\label{lem:KLMax}
Suppose that random variables $Y$ and $Z$ satisfy $D_{\infty}(Y||Z)\leq\epsilon$ and $D_{\infty}(Z||Y)\leq\epsilon$. 
Then, $D(Y||Z)\leq\epsilon\cdot (e^{\epsilon}-1)$.
\end{lem}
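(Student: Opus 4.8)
The plan is to exploit the fact that both $D(Y\|Z)$ and $D(Z\|Y)$ are nonnegative, so it suffices to bound their \emph{sum} by $\epsilon(e^\epsilon-1)$ and then discard $D(Z\|Y)\ge 0$. First I would combine the two expectations from Definition \ref{Def:KL} over the common support to write
$D(Y\|Z)+D(Z\|Y)=\sum_y\big[Pr(Y=y)-Pr(Z=y)\big]\ln\frac{Pr(Y=y)}{Pr(Z=y)}$.
The key structural observation is that in each term the factor $Pr(Y=y)-Pr(Z=y)$ and the factor $\ln\frac{Pr(Y=y)}{Pr(Z=y)}$ carry the same sign, so every summand is nonnegative and equals $\big|Pr(Y=y)-Pr(Z=y)\big|\cdot\big|\ln\frac{Pr(Y=y)}{Pr(Z=y)}\big|$.

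Next I would bring in the hypotheses. Specializing Definition \ref{Def:Max} to singleton sets (equivalently, applying Lemma \ref{lem:DPDivergence} with $\mathcal{M}$ the identity map) gives the pointwise bound $\big|\ln\frac{Pr(Y=y)}{Pr(Z=y)}\big|\le\epsilon$ for every $y$ in the support. Hence $D(Y\|Z)+D(Z\|Y)\le\epsilon\sum_y\big|Pr(Y=y)-Pr(Z=y)\big|$. Partitioning the support into $S_+=\{y:Pr(Y=y)\ge Pr(Z=y)\}$ and $S_-=\{y:Pr(Y=y)<Pr(Z=y)\}$, this $\ell_1$ distance decomposes as $\big(Pr(Y\in S_+)-Pr(Z\in S_+)\big)+\big(Pr(Z\in S_-)-Pr(Y\in S_-)\big)$.

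Finally I would apply the max-divergence bounds to the two pieces: from $D_\infty(Y\|Z)\le\epsilon$ we get $Pr(Y\in S_+)\le e^\epsilon Pr(Z\in S_+)$, hence $Pr(Y\in S_+)-Pr(Z\in S_+)\le(e^\epsilon-1)Pr(Z\in S_+)$, and symmetrically from $D_\infty(Z\|Y)\le\epsilon$ we get $Pr(Z\in S_-)-Pr(Y\in S_-)\le(e^\epsilon-1)Pr(Y\in S_-)$. Adding, $D(Y\|Z)+D(Z\|Y)\le\epsilon(e^\epsilon-1)\big[Pr(Z\in S_+)+Pr(Y\in S_-)\big]$, and since $Pr(Y\in S_-)\le Pr(Z\in S_-)$ on $S_-$, the bracket is at most $Pr(Z\in S_+)+Pr(Z\in S_-)=1$. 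Dropping the nonnegative term $D(Z\|Y)$ then yields $D(Y\|Z)\le\epsilon(e^\epsilon-1)$.

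The calculation is short once the decomposition is in place; the only genuinely non-obvious moves are (i) adding in the extra nonnegative term $D(Z\|Y)$ to symmetrize the sum so that every summand becomes a product of two same-sign factors, and (ii) the final estimate $Pr(Z\in S_+)+Pr(Y\in S_-)\le 1$, which replaces the naive bound of $2$ that would only give $2\epsilon(e^\epsilon-1)$. I would also take care to justify the pointwise ratio bound cleanly from Definition \ref{Def:Max}, since $D_\infty$ is phrased over arbitrary subsets rather than individual outcomes.
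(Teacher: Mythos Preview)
The paper does not supply its own proof of this lemma: it is quoted verbatim from \cite{Dwork14} and used as a black box in the proof of Theorem~\ref{thm:bound}. Your argument is correct and is in fact the standard proof from that reference---symmetrize by adding $D(Z\|Y)\ge 0$, collapse to $\sum_y[p(y)-q(y)]\ln\frac{p(y)}{q(y)}$, bound the log-ratio pointwise by $\epsilon$, and then bound the total variation by $e^\epsilon-1$ via the partition $S_+,S_-$ together with the observation $Pr(Z\in S_+)+Pr(Y\in S_-)\le 1$. There is nothing to compare against in the paper itself, and nothing missing in your write-up.
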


\begin{thm}\label{thm:bound}
Given that the privacy level of an original vector is $k\cdot\epsilon$ for each access, 
to guarantee its overall privacy level to be $\epsilon'$,
the upper bound of the number of rounds is $\frac{\epsilon'\cdot(e^{\epsilon'}-1)}{k\cdot\epsilon\cdot(e^{k\epsilon}-1)}$.
\end{thm}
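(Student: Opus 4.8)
I would prove Theorem~\ref{thm:bound} by treating the $T$ repeated queries on the same record $\mathbf{x}$ as a $T$-fold composition of $\mathcal{M}_d$ and tracking how the privacy loss, measured through KL-divergence, accumulates. By Theorem~\ref{thm:DP} each individual release is $(k\cdot\epsilon)$-differentially private, so for any two neighboring confidence vectors $\mathbf{y}$ and $\hat{\mathbf{y}}$, Lemma~\ref{lem:DPDivergence} gives $D_{\infty}(\mathcal{M}_d(\mathbf{y})||\mathcal{M}_d(\hat{\mathbf{y}}))\leq k\epsilon$ together with the symmetric bound. Feeding these into Lemma~\ref{lem:KLMax} (with the role of $\epsilon$ played by $k\epsilon$) yields the per-round KL bound $D(\mathcal{M}_d(\mathbf{y})||\mathcal{M}_d(\hat{\mathbf{y}}))\leq k\epsilon\,(e^{k\epsilon}-1)$.

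Next I would invoke the additivity of KL-divergence over the independent, freshly re-randomized runs of $\mathcal{M}_d$: across $T$ rounds the joint output distributions on the two neighboring inputs differ in KL-divergence by at most $T\cdot k\epsilon\,(e^{k\epsilon}-1)$. On the other hand, a single release that is $\epsilon'$-differentially private would, by the very same two lemmas, incur a KL-divergence of at most $\epsilon'(e^{\epsilon'}-1)$. Reading $\epsilon'$ as the total privacy level one is willing to spend, the admissible number of rounds is exactly those $T$ for which the accumulated divergence does not exceed this ceiling, i.e. $T\cdot k\epsilon\,(e^{k\epsilon}-1)\leq\epsilon'(e^{\epsilon'}-1)$. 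Solving for $T$ gives the claimed bound $T\leq\dfrac{\epsilon'(e^{\epsilon'}-1)}{k\epsilon\,(e^{k\epsilon}-1)}$, and one takes the floor for an integer count of accesses.

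The delicate point — the one I would state most carefully — is the last implication: a bound on accumulated KL-divergence does not by itself re-certify $\epsilon'$-differential privacy, since the KL-to-$(\epsilon,\delta)$ direction is only one-way (this is why a fully rigorous advanced-composition argument needs an Azuma/concentration step on the privacy-loss random variable rather than just on its expectation). So the argument here is really a budget-accounting one: it matches the expected privacy loss after $T$ rounds against the expected privacy loss tolerated by an $\epsilon'$-private mechanism, using the common KL surrogate supplied by Lemmas~\ref{lem:DPDivergence} and \ref{lem:KLMax}. A secondary and much easier point is adaptivity: because the attacker re-submits the \emph{same} $\mathbf{x}$ in every round, the queries are non-adaptive and the per-round KL-divergences simply add; if one wished to let the attacker adapt between rounds, the chain rule for KL-divergence still produces the same sum, so the bound is unchanged.
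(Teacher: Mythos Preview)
Your proposal is correct and follows essentially the same route as the paper: apply Lemmas~\ref{lem:DPDivergence} and~\ref{lem:KLMax} to bound the per-round KL-divergence by $k\epsilon(e^{k\epsilon}-1)$, use additivity of KL over the $T$ independent releases, and then compare against the KL ceiling $\epsilon'(e^{\epsilon'}-1)$ associated with an $\epsilon'$-private mechanism. Your closing caveat---that matching KL budgets does not, strictly speaking, re-certify $\epsilon'$-differential privacy---is a valid observation that the paper's proof does not address.
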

\begin{proof}
Let the upper bound of the number of access times be $b$, and the corresponding $b$ perturbed vectors, which can be observed by an attacker, be $\mathbf{o}=(\mathbf{z}^1,...,\mathbf{z}^b)$. 
We have 
\begin{equation}\nonumber
\begin{aligned}
    D(Y||Z)&=ln\left[\frac{Pr(Y=\mathbf{o})}{Pr(Z=\mathbf{o})}\right]=ln\left[\prod^b_{i=1}\frac{Pr(Y_i=\mathbf{z}^i)}{Pr(Z_i=\mathbf{z}^i)}\right]\\
    &=\sum^b_{i=1}ln\left[\frac{Pr(Y_i=\mathbf{z}^i)}{Pr(Z_i=\mathbf{z}^i)}\right]=\sum^b_{i=1}D(Y_i||Z_i).
\end{aligned}
\end{equation}
As the original vector is guaranteed $k\cdot\epsilon$-differential privacy for each individual access, 
based on Lemma \ref{lem:DPDivergence}, 
we have $D_{\infty}(Y_i||Z_i)\leq k\cdot\epsilon$ and $D_{\infty}(Z_i||Y_i)\leq k\cdot\epsilon$.
Based on this result, according to Lemma \ref{lem:KLMax},
we have $D(Y_i||Z_i)\leq k\cdot\epsilon\cdot(e^{k\epsilon}-1)$.
Thus, we have $D(Y||Z)=\sum^b_{i=1}D(Y_i||Z_i)\leq b\cdot k\cdot\epsilon\cdot(e^{k\epsilon}-1)$.

To guarantee the original vector's overall privacy level to be $\epsilon'$,
according to Lemma \ref{lem:DPDivergence},
we have $D_{\infty}(Y||Z)\leq\epsilon'$ and $D_{\infty}(Z||Y)\leq\epsilon'$.
By using Lemma \ref{lem:KLMax},
we have $D(Y||Z)\leq\epsilon'\cdot(e^{\epsilon'}-1)$.
As $D(Y||Z)\leq b\cdot k\cdot\epsilon\cdot(e^{k\epsilon}-1)$ and $D(Y||Z)\leq\epsilon'\cdot(e^{\epsilon'}-1)$,
the upper bound $b$ is limited by $\frac{\epsilon'\cdot(e^{\epsilon'}-1)}{k\cdot\epsilon\cdot(e^{k\epsilon}-1)}$.
\end{proof}

\subsection{Analysis of tuning $\epsilon$}\label{sub:epsilon}
The input for the first step of our method is a confidence score vector $\mathbf{y}$, 
and the output is a perturbed vector $\mathbf{y}'$. 
The input for the second step is $\mathbf{y}'$, 
and the output is a normalized vector $\mathbf{z}$. 
To maximize the utility of the normalized vector, 
it should be equal to the original vector: $\mathbf{z}=\mathbf{y}$. 
This equality does not guarantee any privacy over the original vector and must be avoided in practice, 
but this equality gives us a start point to investigate how to tune the parameter, $\epsilon$, to achieve the balance between privacy and utility. 
Specifically, the utility of a normalized vector $\mathbf{z}$ is defined as $||\mathbf{z}-\mathbf{y}||_1$, 
where a smaller value of $||\mathbf{z}-\mathbf{y}||_1$ means a higher utility but a lower privacy guarantee. 

Given that $\mathbf{y}=\langle y_1,...,y_k\rangle$, $\mathbf{y}'=\langle y'_1,...,y'_k\rangle$ 
and $\mathbf{z}=\langle z_1,...,z_k\rangle$, we have the following system of $k$ equations:
\begin{equation}\label{eq:system}
\left\{
\begin{array}{lr}
    z_1=\frac{exp(\frac{\epsilon y'_1}{2})}{\sum_{1\leq j\leq k}exp(\frac{\epsilon y'_j}{2})}=y_1, &  \\
    ... \\
    z_k=\frac{exp(\frac{\epsilon y'_k}{2})}{\sum_{1\leq j\leq k}exp(\frac{\epsilon y'_j}{2})}=y_k. &  
\end{array}
\right.
\end{equation}
We first prove that Algorithm \ref{alg:step1} preserves the order of scores in vector $\mathbf{y}$, 
and then show that Equation \ref{eq:system} has a unique positive solution. 
After that, we analyze the properties of the solution to Equation \ref{eq:system}.

\begin{lem}\label{lem0}
Algorithm \ref{alg:step1} preserves the order of scores in vector $\mathbf{y}$, 
i.e., if $y_i<y_j$, then $y'_i<y'_j$, where $1\leq i\neq j \leq k$.
\end{lem}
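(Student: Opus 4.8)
The plan is to argue purely from the structure of the sub-range construction in Algorithm \ref{alg:step1}, without ever needing the randomness of the exponential mechanism. Assume without loss of generality that the scores have been sorted so that $y_1 \leq y_2 \leq \cdots \leq y_k$, as in Line 3. The key observation is that the modified score $y'_i$ is, by construction, drawn from the discretized sub-range $i$, namely $\{\frac{y_{i-1}+y_i}{2},\, \frac{y_{i-1}+y_i}{2}+\rho,\, \ldots,\, \frac{y_{i-1}+y_i}{2}+(m-1)\rho\}$, which is contained in the interval $[\frac{y_{i-1}+y_i}{2},\, \frac{y_i+y_{i+1}}{2})$ (with the endpoint conventions $[0,\frac{y_1+y_2}{2})$ for $i=1$ and $[\frac{y_{k-1}+y_k}{2},1)$ for $i=k$). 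So it suffices to show these sub-ranges are themselves ordered: every element of sub-range $i$ is strictly less than every element of sub-range $i+1$ whenever $y_i < y_{i+1}$.

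First I would establish that consecutive sub-ranges do not overlap and are correctly ordered: the right endpoint of sub-range $i$ is $\frac{y_i+y_{i+1}}{2}$, which is exactly the left endpoint of sub-range $i+1$. Since sub-range $i$ is a half-open interval closed on the left and open on the right, it lies entirely in $[\cdot,\frac{y_i+y_{i+1}}{2})$ while sub-range $i+1$ lies in $[\frac{y_i+y_{i+1}}{2},\cdot)$; hence $y'_i < \frac{y_i+y_{i+1}}{2} \leq y'_{i+1}$. This already gives $y'_i < y'_{i+1}$ for every $i$ with $y_i < y_{i+1}$ (the strictness of the first inequality is guaranteed because $y'_i$ is taken from the discretized set, all of whose points lie strictly below the right endpoint, as $\rho = \frac{y_{i+1}-y_{i-1}}{2m}$ makes the largest grid point $\frac{y_{i-1}+y_i}{2}+(m-1)\rho < \frac{y_i+y_{i+1}}{2}$). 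Chaining this across indices: if $y_i < y_j$ with $i < j$ after sorting, then there is at least one strict increase among $y_i \leq y_{i+1} \leq \cdots \leq y_j$, and at each step the sub-ranges are (non-strictly, then at the strict step strictly) ordered, so $y'_i \leq \cdots < \cdots \leq y'_j$, giving $y'_i < y'_j$.

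The one subtlety I would be careful about — and what I expect to be the main obstacle — is the case of \emph{ties} in the original vector, i.e., $y_i = y_{i+1}$. When scores are equal, the corresponding sub-ranges can degenerate (zero width) or coincide, and $\rho$ may be zero, so the modified scores $y'_i$ and $y'_{i+1}$ could be equal rather than strictly ordered. This is not actually a problem for the lemma as stated, since the claim only concerns indices with $y_i < y_j$ (strict), not $y_i \leq y_j$; but I would note explicitly that equal original scores map to equal-or-weakly-ordered modified scores, which is enough to ensure the argmax is preserved and hence zero accuracy loss. I would phrase the conclusion as: the map $i \mapsto y'_i$ is order-preserving on the sorted indices, and strictly so across any strict increase of $\mathbf{y}$, which is exactly the statement $y_i < y_j \Rightarrow y'_i < y'_j$.
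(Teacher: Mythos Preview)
Your proposal is correct and uses essentially the same idea as the paper: both argue that $y'_i$ and $y'_j$ lie in disjoint, ordered half-open sub-ranges, so $y'_i < \tfrac{y_i+y_{i+1}}{2} \leq \tfrac{y_{j-1}+y_j}{2} \leq y'_j$. The only difference is that the paper compares sub-range $i$ to sub-range $j$ directly in one step (using $i+1\leq j$ and sortedness to get $\tfrac{y_i+y_{i+1}}{2}\leq\tfrac{y_{j-1}+y_j}{2}$), whereas you chain through consecutive sub-ranges and then handle ties; the direct comparison is a bit shorter and sidesteps the tie discussion entirely.
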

\begin{proof}
According to Line 3 of Algorithm \ref{alg:step1}, 
when $y_i<y_j$, we have $i<j$.
According to Line 8 of Algorithm \ref{alg:step1}, we have 
$y'_i\in[\frac{y_{i-1}+y_i}{2},\frac{y_i+y_{i+1}}{2})$ and 
$y'_j\in[\frac{y_{j-1}+y_j}{2},\frac{y_j+y_{j+1}}{2})$. 
As $i<j$, we have $i\leq j-1$ and $i+1\leq j$. 
Thus, we have $y_i\leq y_{j-1}$ and $y_{i+1}\leq y_j$, which implies that $\frac{y_i+y_{i+1}}{2}\leq \frac{y_{j-1}+y_j}{2}$. 
Therefore, we can conclude $y'_i<y'_j$.
\end{proof}

The following Lemmas \ref{lem1} and \ref{lem2} and Theorem \ref{thm:unique} show that 
Equation \ref{eq:system} has a unique positive solution.
\begin{lem}\label{lem1}
Let $y'_{min}=min\{y'_1,...,y'_k\}$ and $y'_{max}=max\{y'_1,...,y'_k\}$, 
we have: $y'_{min}<\frac{1}{k}$ and $y'_{max}\geq\frac{1}{k}$.
\end{lem}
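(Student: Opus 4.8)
The plan is to prove both inequalities directly from the combinatorial structure of Algorithm~\ref{alg:step1}, the only external fact needed being that $\mathbf{y}$ is a probability vector, i.e.\ $\sum_{i=1}^{k}y_i=1$, with $y_1\le\cdots\le y_k$ after the sort in Line~3. The key simplification is that I need not identify which coordinate of $\mathbf{y}'$ is smallest or largest: it suffices to show that the coordinate $y'_1$ replacing $y_1$ satisfies $y'_1<\tfrac1k$ and that the coordinate $y'_k$ replacing $y_k$ satisfies $y'_k\ge\tfrac1k$, since trivially $y'_{\min}\le y'_1$ and $y'_{\max}\ge y'_k$.

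First I would bound $y'_1$. By Lines~5--8 of Algorithm~\ref{alg:step1}, $y'_1$ is selected from the discretized first sub-range $\{0,\rho_1,\dots,(m-1)\rho_1\}$ with step $\rho_1=\tfrac{y_1+y_2}{2m}$, so $y'_1\le(m-1)\rho_1=\tfrac{m-1}{m}\cdot\tfrac{y_1+y_2}{2}$. It then remains to show $\tfrac{y_1+y_2}{2}\le\tfrac1k$, which is the elementary fact that the two smallest of $k$ nonnegative reals summing to $1$ have sum at most $\tfrac2k$: if instead $y_1+y_2>\tfrac2k$, then since $y_i\ge y_2\ge\tfrac{y_1+y_2}{2}>\tfrac1k$ for every $i\ge2$, we would get $\sum_{i=1}^{k}y_i=(y_1+y_2)+\sum_{i=3}^{k}y_i>\tfrac2k+\tfrac{k-2}{k}=1$, a contradiction. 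Combining the two bounds yields $y'_{\min}\le y'_1\le\tfrac{m-1}{m}\cdot\tfrac1k<\tfrac1k$.

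The bound on $y'_k$ is the mirror image. By Line~8, $y'_k$ lies in the discretized last sub-range, whose smallest element is $\tfrac{y_{k-1}+y_k}{2}$, so $y'_k\ge\tfrac{y_{k-1}+y_k}{2}$; and the two largest of $k$ nonnegative reals summing to $1$ have sum at least $\tfrac2k$ (if $y_{k-1}+y_k<\tfrac2k$ then $y_i\le y_{k-1}<\tfrac1k$ for all $i\le k-1$, forcing $\sum_{i=1}^{k}y_i=(y_{k-1}+y_k)+\sum_{i=1}^{k-2}y_i<\tfrac2k+\tfrac{k-2}{k}=1$). Hence $y'_{\max}\ge y'_k\ge\tfrac{y_{k-1}+y_k}{2}\ge\tfrac1k$, which completes the argument.

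I do not expect a genuine obstacle; the proof is essentially bookkeeping around two order-statistic inequalities. The only points that deserve a line of care are the degenerate cases: $m=1$, where the first discretized sub-range collapses to $\{0\}$ and one simply reads off $y'_1=0<\tfrac1k$; ties among the $y_i$, which are harmless since the chains $y_i\ge y_2$ (for $i\ge2$) and $y_i\le y_{k-1}$ (for $i\le k-1$) still hold; and a uniform $\mathbf{y}$, where $\tfrac{y_1+y_2}{2}=\tfrac1k$ exactly, so the \emph{strict} inequality $y'_{\min}<\tfrac1k$ must be drawn from the factor $\tfrac{m-1}{m}<1$ rather than from the probability-vector bound.
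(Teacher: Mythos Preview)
Your proof is correct and takes a more direct route than the paper. Both arguments reduce the claim to the two order-statistic inequalities $\frac{y_1+y_2}{2}\le\frac{1}{k}$ and $\frac{y_{k-1}+y_k}{2}\ge\frac{1}{k}$, but where the paper proves the latter by induction on $k$, you give the one-line contradiction argument from $\sum y_i=1$. Your approach is shorter and also sidesteps a real issue in the paper's induction: the inductive hypothesis there is stated for a length-$n$ probability vector, while the inductive step concerns a different length-$(n{+}1)$ probability vector, so the deduction ``$y_n\ge\frac{2}{n}-y_{n-1}$'' does not actually follow from the hypothesis as written. You are also more careful about the \emph{strict} inequality $y'_{\min}<\frac{1}{k}$, extracting it from the discretization factor $\frac{m-1}{m}<1$ (and handling $m=1$ separately); the paper only proves the non-strict half and defers the strict half with ``the proof is similar.''
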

\begin{proof}
The following proof covers $y'_{max}\geq\frac{1}{k}$. 
The proof of $y'_{min}<\frac{1}{k}$ is similar. 

According to Lines 4-6 of Algorithm \ref{alg:step1}, we know that $\frac{y_{k-1}+y_k}{2}\leq y'_{max}<1$.
Thus, to prove $y'_{max}\geq\frac{1}{k}$, we need only to prove $\frac{y_{k-1}+y_k}{2}\geq\frac{1}{k}$. 
For this, we use mathematical induction. 
When $k=2$, we have $\frac{y_{k-1}+y_k}{2}=\frac{y_1+y_2}{2}=\frac{1}{2}=\frac{1}{k}$. 
Thus, the conclusion is established. 
Assume that when $k=n$, the conclusion is also established, 
i.e., $\frac{y_{n-1}+y_n}{2}\geq\frac{1}{n}$. 
Next, we prove that when $k=n+1$, the conclusion is still established, 
i.e., proving $\frac{y_n+y_{n+1}}{2}\geq\frac{1}{n+1}$. 
As $\frac{y_{n-1}+y_n}{2}\geq\frac{1}{n}$, we have $y_n\geq\frac{2}{n}-y_{n-1}$. 
Therefore, we have
\begin{equation}\nonumber
\begin{aligned}
    \frac{y_n+y_{n+1}}{2}&\geq\frac{(\frac{2}{n}-y_{n-1})+y_{n+1}}{2} 
    \\&=\frac{1}{n}+\frac{y_{n+1}-y_{n-1}}{2}
    \\&\geq\frac{1}{n}>\frac{1}{n+1}.
\end{aligned}
\end{equation}
The second inequality is based on Line 3 of Algorithm \ref{alg:step1}, 
where the scores of vector $\mathbf{y}$ are sorted as $y_1\leq...\leq y_k$. 
Hence, $y_{n+1}\geq y_{n-1}$ and $\frac{y_{n+1}-y_{n-1}}{2}\geq 0$. 
The lemma has been proven.
    
\end{proof}

\begin{lem}\label{lem2}
The system in Equation \ref{eq:system} has at least one positive solution.
\end{lem}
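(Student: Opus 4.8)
The plan is to solve Equation~\ref{eq:system} in closed form after linearizing it. I read the system as one for the pair $(\epsilon,\mathbf{y}')$ — this is the natural reading, since with $\mathbf{y}'$ held completely fixed the $k$ scalar equations generically over-determine the single number $\epsilon$. Taking logarithms in Equation~\ref{eq:perturb2} and writing $S=\sum_{1\le j\le k}e^{\epsilon y'_j/2}$ for the (still unknown) partition function, each equation $z_i=y_i$ becomes $\frac{\epsilon}{2}y'_i=\ln y_i+\ln S$, i.e.
\[
y'_i=\frac{2}{\epsilon}\big(\ln y_i+\ln S\big),\qquad i=1,\dots,k .
\]
Substituting this back into the definition of $S$ gives $S=\sum_j e^{\ln y_j+\ln S}=S\sum_j y_j=S$, so the relations are self-consistent for \emph{every} $\epsilon>0$ and \emph{every} $S>0$: once $\epsilon$ is chosen, the solutions form a one-parameter family indexed by $S$, which is just the shift invariance of the exponential normalization ($y'_i\mapsto y'_i+c$ rescales $S$ but leaves $\mathbf{z}$ unchanged).

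It then remains only to single out one member of this family whose entries are all positive. From $y'_i=\frac{2}{\epsilon}\ln(y_iS)$ we get $y'_i>0\iff y_iS>1$, and since $y_1\le\cdots\le y_k$ this reduces to the single condition $S>1/y_1$, which is satisfiable because $y_1>0$. Choosing any such $S$ (say $S=2/y_1$) and setting $y'_i=\frac{2}{\epsilon}\ln(y_iS)$ produces a positive solution; moreover these $y'_i$ are increasing in $y_i$, in keeping with Lemma~\ref{lem0}. If one further insists that the exhibited solution be realizable by Algorithm~\ref{alg:step1}, i.e. that each $y'_i$ land in its sub-range $[\frac{y_{i-1}+y_i}{2},\frac{y_i+y_{i+1}}{2})$, then $S$ is confined to a bounded window, and the bounds $y'_{\min}<\tfrac1k\le y'_{\max}$ of Lemma~\ref{lem1} are precisely what make that window nonempty — the ingredient the subsequent uniqueness statement will lean on.

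The step I expect to need the most care is conceptual rather than computational: being explicit about what counts as ``the solution'' of Equation~\ref{eq:system}. Because of the scaling/shift symmetry noted above, existence of a positive solution is essentially free, while the real content sits in the uniqueness theorem, where the realizability constraints together with Lemma~\ref{lem1} cut the family down to a single point. Two bookkeeping items also deserve a line: the degenerate case in which $\mathbf{y}$ is (nearly) a point mass, which under the realizability constraint forces $\epsilon$ large but still leaves a valid $S$; and ties $y_i=y_j$, which the closed form handles automatically (the logarithm being strictly increasing) and which cause no difficulty for the existence claim.
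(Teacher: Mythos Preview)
Your reading of what counts as the unknown in Equation~\ref{eq:system} differs from the paper's. In the paper $\mathbf{y}'$ is \emph{fixed} --- it is the output already produced by Algorithm~\ref{alg:step1} --- and the sole unknown is $\epsilon$. This is clear from the surrounding prose (``the output is a perturbed vector $\mathbf{y}'$'') and especially from the proof of Theorem~\ref{thm:unique}, where ``two different positive solutions $\epsilon_1$ and $\epsilon_2$'' are compared while $\mathbf{y}'$ stays put. The paper's proof of Lemma~\ref{lem2} accordingly just divides the equation for $y_{\max}$ by that for a generic $y_j$, takes logarithms, and reads off
\[
\epsilon=\frac{2[\ln y_{\max}-\ln y_j]}{y'_{\max}-y'_j}>0,
\]
with positivity coming from $y_{\max}>y_j\Leftrightarrow y'_{\max}>y'_j$ (Lemma~\ref{lem0}).

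Your worry that $k$ equations generically over-determine a single scalar is well taken, and the paper indeed does not check that the expressions above agree as $j$ varies. But your cure --- promoting $\mathbf{y}'$ to an unknown --- changes the statement being proved. You establish that for every $\epsilon>0$ there exists \emph{some} positive $\mathbf{y}'$ with $\mathbf{z}=\mathbf{y}$, which is both easier and not what the downstream results need: Theorem~\ref{thm:unique} and Properties~\ref{property3}--\ref{property5} all treat $\epsilon^*$ as the single value attached to a \emph{given} $\mathbf{y}'$. Your closing remarks about realizability still have you \emph{choosing} $\mathbf{y}'$ within its sub-ranges rather than accepting the particular $\mathbf{y}'$ that Algorithm~\ref{alg:step1} has already drawn. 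So while your computation is internally sound, it does not establish the lemma as the paper intends it, and it would not slot into the uniqueness argument that follows.
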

\begin{proof}
According to Equation \ref{eq:system}, we have 
$\frac{y_{max}}{y_j}=\frac{exp(\frac{\epsilon y'_{max}}{2})}{exp(\frac{\epsilon y'_j}{2})}$, 
where $1\leq j\leq k$. Then, we have the following deduction. 
\begin{equation}\nonumber
\begin{aligned}
    \frac{y_{max}}{y_j}&=\frac{exp(\frac{\epsilon y'_{max}}{2})}{exp(\frac{\epsilon y'_j}{2})} 
    \\ln(\frac{y_{max}}{y_j})&=ln(\frac{exp(\frac{\epsilon y'_{max}}{2})}{exp(\frac{\epsilon y'_j}{2})})
    \\ln(y_{max})-ln(y_j)&=\frac{\epsilon}{2}(y'_{max}-y'_j) 
    \\ \epsilon&=\frac{2[ln(y_{max})-ln(y_j)]}{y'_{max}-y'_j}>0
\end{aligned}
\end{equation}
This $\epsilon$ is a positive solution to the system.
\end{proof}

Next, we prove that this $\epsilon$ is a unique solution. 
Formally, based on Lemmas \ref{lem1} and \ref{lem2}, we have: 
\begin{thm}\label{thm:unique}
The system in Equation \ref{eq:system} has a unique positive solution.
\end{thm}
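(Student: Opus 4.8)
The plan is to take the existence of a positive solution for granted (Lemma \ref{lem2} already produces one) and to spend all the effort on uniqueness, i.e.\ on showing that no two distinct positive values of $\epsilon$ can satisfy Equation \ref{eq:system} simultaneously. First I would eliminate the shared normalizing denominator $\sum_{1\le j\le k}\exp(\epsilon y'_j/2)$ from the system: dividing the equation for the largest score $y_{max}$ by the equation for the smallest score $y_{min}$ gives
\begin{equation}\nonumber
\frac{y_{max}}{y_{min}}=\exp\!\left(\tfrac{\epsilon}{2}\bigl(y'_{max}-y'_{min}\bigr)\right),
\end{equation}
where, by Lemma \ref{lem0}, the modification step maps $y_{max}$ to $y'_{max}$ and $y_{min}$ to $y'_{min}$ and preserves their strict order, and by Lemma \ref{lem1} we moreover have $y'_{min}<1/k\le y'_{max}$, so in particular $y'_{max}-y'_{min}>0$.

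With $y'_{max}-y'_{min}$ a fixed positive constant, the right-hand side above is a continuous, strictly increasing function of $\epsilon$ on $(0,\infty)$ that sweeps out the whole interval $(1,\infty)$, whereas the left-hand side is the fixed constant $y_{max}/y_{min}>1$ (strictly, since the confidence scores are positive and not all equal). A strictly monotone function attains each value at most once, so this one scalar equation has exactly one positive root $\epsilon^*$. Since any $\epsilon$ solving the full system in Equation \ref{eq:system} must in particular satisfy this ratio equation, the system has at most one positive solution; combining this with Lemma \ref{lem2} (at least one positive solution) yields exactly one, which proves Theorem \ref{thm:unique}.

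The step I expect to need the most care is guaranteeing the two strict inequalities $y_{max}>y_{min}$ and $y'_{max}>y'_{min}$ that make the monotonicity argument bite. The former fails precisely in the degenerate case where $F(x)$ is the uniform vector $\langle 1/k,\dots,1/k\rangle$; there $z_i=1/k=y_i$ holds for every $\epsilon>0$, so the ``unique'' root is ill-defined. I would therefore state explicitly that the analysis assumes $\mathbf{y}$ contains at least two distinct scores --- a mild assumption, since a genuinely uniform output carries no confidence information to protect --- and then Lemma \ref{lem0} upgrades $y_{max}>y_{min}$ to $y'_{max}>y'_{min}$, closing the gap. One could alternatively run the same argument through the single coordinate map $\epsilon\mapsto z_{max}(\epsilon)=\exp(\epsilon y'_{max}/2)/\sum_j\exp(\epsilon y'_j/2)$, which increases strictly from $1/k$ toward $1$ as $\epsilon$ grows; but the ratio form is closer to the computation already performed in the proof of Lemma \ref{lem2}.
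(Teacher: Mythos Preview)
Your proposal is correct and follows essentially the same route as the paper: both divide the equation for $y_{max}$ by that for $y_{min}$ to obtain the scalar relation $y_{max}/y_{min}=\exp\bigl(\tfrac{\epsilon}{2}(y'_{max}-y'_{min})\bigr)$, invoke Lemma~\ref{lem1} to ensure $y'_{max}\neq y'_{min}$, and conclude uniqueness (the paper phrases it as a contradiction $\epsilon_1=\epsilon_2$, you via strict monotonicity, but these are the same argument). Your explicit handling of the degenerate uniform case $y_{max}=y_{min}$ is a worthwhile addition that the paper omits.
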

\begin{proof}
Assume we have two different positive solutions $\epsilon_1$ and $\epsilon_2$. 
Then, we have 
\begin{equation}\nonumber
\begin{aligned}
    &\frac{y_{max}}{y_{min}}=\frac{exp(\frac{\epsilon_1 y'_{max}}{2})}{exp(\frac{\epsilon_1 y'_{min}}{2})}=\frac{exp(\frac{\epsilon_2 y'_{max}}{2})}{exp(\frac{\epsilon_2 y'_{min}}{2})}
    \\& \Rightarrow exp[\frac{\epsilon_1-\epsilon_2}{2}y'_{max}]=exp[\frac{\epsilon_1-\epsilon_2}{2}y'_{min}]
    \\& \Rightarrow exp[\frac{\epsilon_1-\epsilon_2}{2}(y'_{max}-y'_{min})]=1.
\end{aligned}
\end{equation}
According to Lemma \ref{lem1}, $y'_{max}\neq y'_{min}$, thus $\epsilon_1=\epsilon_2$.
According to Lemma \ref{lem2}, the system has at least one solution. 
Hence, the system has a unique solution.
\end{proof}

The properties of the solution are analyzed as follows. 
Given $z_i=y_i=\frac{exp(\frac{\epsilon y'_i}{2})}{\sum_{1\leq j\leq k}exp(\frac{\epsilon y'_j}{2})}$, 
where $1\leq i\leq k$, the solution has the following properties. 

\begin{property}\label{property2}
When $\epsilon>0$, if $y'_i>y'_j$, then $z_i>z_j$, where $1\leq i, j\leq k$.
\end{property}
\begin{proof}
$\frac{z_i}{z_j}=exp[\frac{\epsilon}{2}(y'_i-y'_j)]\geq 1$,  
therefore, $z_i>z_j$.
\end{proof}
Property \ref{property2} combined with Lemma \ref{lem0} contends that 
our defense method preserves the order of scores in confidence score vector $\mathbf{y}$. 
As such, it holds that since the order of scores in $\mathbf{y}$ can be preserved, 
when $\epsilon>0$, 
the utility of the confidence score vector $\mathbf{y}$ can be guaranteed. 
This means that the class with the highest probability in $\mathbf{y}$ 
still has the highest probability in the normalized vector $\mathbf{z}$. 
This property guarantees a good user experience, 
as users usually select the predicted class with the highest probability.

Further discussions on Property \ref{property2} are included with the 
Properties \ref{property3} and \ref{property4}.
\begin{property}\label{property3}
Let $\epsilon^*$ be the solution of the system and $y'_i>y'_j$, where $1\leq i,j\leq k$. 
Then, if $\epsilon>\epsilon^*$, $\frac{z_i}{z_j}>\frac{y_i}{y_j}$; 
if $\epsilon<\epsilon^*$, $\frac{z_i}{z_j}<\frac{y_i}{y_j}$.
\end{property}
\begin{proof}
Because $z_i=\frac{exp(\frac{\epsilon y'_i}{2})}{\sum_{1\leq j\leq k}exp(\frac{\epsilon y'_j}{2})}$ and 
$y_i=\frac{exp(\frac{\epsilon^* y'_i}{2})}{\sum_{1\leq j\leq k}exp(\frac{\epsilon^* y'_j}{2})}$, 
we have 
\begin{equation}\nonumber
\begin{aligned}
    \frac{z_i}{z_j}/\frac{y_i}{y_j}&=\frac{exp[\frac{\epsilon}{2}(y'_i-y'_j)]}{exp[\frac{\epsilon^*}{2}(y'_i-y'_j)]}
    \\& =exp[\frac{\epsilon-\epsilon^*}{2}(y'_i-y'_j)].
\end{aligned}
\end{equation}
Since it is assumed that $y'_i>y'_j$, when $\epsilon>\epsilon*$, 
$\frac{z_i}{z_j}/\frac{y_i}{y_j}>1\Rightarrow\frac{z_i}{z_j}>\frac{y_i}{y_j}$; 
when $\epsilon<\epsilon^*$, $\frac{z_i}{z_j}/\frac{y_i}{y_j}<1\Rightarrow\frac{z_i}{z_j}<\frac{y_i}{y_j}$.
\end{proof}

\begin{property}\label{property4}
If $\epsilon>\epsilon^*$, then $z_{min}<y_{min}$ and $z_{max}>y_{max}$; 
if $\epsilon<\epsilon^*$, then $z_{min}>y_{min}$ and $z_{max}<y_{max}$.
\end{property}
\begin{proof}
We know that 
\begin{equation}\nonumber
\begin{aligned}
    &\sum_{1\leq i\leq k}exp[\frac{\epsilon}{2}(y'_i-y'_j)]-\sum_{1\leq i\leq k}exp[\frac{\epsilon^*}{2}(y'_i-y'_j)]
    \\& =\sum_{1\leq i\leq k}[exp[\frac{\epsilon}{2}(y'_i-y'_j)]-exp[\frac{\epsilon^*}{2}(y'_i-y'_j)]],
\end{aligned}
\end{equation}
where $1\leq j\leq k$. 
Therefore, when $\epsilon>\epsilon^*$, we have 
\begin{equation}\nonumber
\begin{aligned}
    &\sum_{1\leq i\leq k}exp[\frac{\epsilon}{2}(y'_i-y'_{min})]-\sum_{1\leq i\leq k}exp[\frac{\epsilon^*}{2}(y'_i-y'_{min})]>0,
    \\& \sum_{1\leq i\leq k}exp[\frac{\epsilon}{2}(y'_i-y'_{max})]-\sum_{1\leq i\leq k}exp[\frac{\epsilon^*}{2}(y'_i-y'_{max})]<0.
\end{aligned}
\end{equation}
Because 
\begin{equation}\nonumber
\begin{aligned}
    &\frac{1}{z_{min}}=\sum_{1\leq i\leq k}\frac{z_i}{z_{min}}=\sum_{1\leq i\leq k}exp[\frac{\epsilon}{2}(y'_i-y'_{min})], 
    \\& \frac{1}{y_{min}}=\sum_{1\leq i\leq k}\frac{y_i}{y_{min}}=\sum_{1\leq i\leq k}exp[\frac{\epsilon^*}{2}(y'_i-y'_{min})], 
    \\& \frac{1}{z_{max}}=\sum_{1\leq i\leq k}\frac{z_i}{z_{max}}=\sum_{1\leq i\leq k}exp[\frac{\epsilon}{2}(y'_i-y'_{max})], 
    \\& \frac{1}{y_{max}}=\sum_{1\leq i\leq k}\frac{y_i}{y_{max}}=\sum_{1\leq i\leq k}exp[\frac{\epsilon^*}{2}(y'_i-y'_{max})],
\end{aligned}
\end{equation}
thus, we have 
\begin{equation}\nonumber
\begin{aligned}
    &\frac{1}{z_{min}}-\frac{1}{y_{min}}>0\Rightarrow z_{min}<y_{min}, 
    \\& \frac{1}{z_{max}}-\frac{1}{y_{max}}<0\Rightarrow z_{max}>y_{max}.
\end{aligned}
\end{equation}
By symmetry, we have that when $\epsilon<\epsilon^*$, 
$z_{min}>y_{min}$ and $z_{max}<y_{max}$.
\end{proof}

Properties \ref{property3} and \ref{property4} state that if $\epsilon>\epsilon^*$, the difference in probabilities $y_{max}-y_{min}$ within the confidence score vector $\mathbf{y}$ will increase in vector $\mathbf{z}$, i.e., $z_{max}-z_{min}>y_{max}-y_{min}$. This means that even if the target model is not very confident in predicting an input data record $\mathbf{x}$, the defense method can make the output appear very confident to the attacker. Similarly, the defense method can also make a very confident output appear less confident to the attacker.

The last property to analyze is the change in the distance between a confidence score vector $\mathbf{y}$ and its perturbed version $\mathbf{z}$. As the success of a model inversion attack is based on the rich information contained in each confidence score vector, an intuitive way to defend against model inversion attacks is to widen the distance between $\mathbf{y}$ and $\mathbf{z}$. The distance is formally defined as $|\mathbf{z}-\mathbf{y}|_1=\sum^k_{i=1}|z_i-y_i|$. Then, we have the following property.

\begin{property}\label{property5}
If $\epsilon>\epsilon^*$, then $|\mathbf{z}-\mathbf{y}|_1$ increases as $\epsilon$ increases; if $\epsilon<\epsilon^*$, then $|\mathbf{z}-\mathbf{y}|_1$ increases as $\epsilon$ decreases.
\end{property}
\begin{proof}

\begin{equation}\nonumber
\begin{aligned}
&|\mathbf{z}-\mathbf{y}|_1=\sum^k_{i=1}|z_i-y_i|\\
&=\sum^k_{i=1}|\frac{exp(\frac{\epsilon y'_i}{2})}{\sum^k_{j=1}exp(\frac{\epsilon y'_j}{2})}-\frac{exp(\frac{\epsilon^* y'_i}{2})}{\sum^k_{j=1}exp(\frac{\epsilon^* y'_j}{2})}|\\
&=\frac{\sum^k_{i=1}|exp(\frac{\epsilon y'_i}{2})\sum^k_{j=1}exp(\frac{\epsilon^* y'_j}{2})-exp(\frac{\epsilon^* y'_i}{2})\sum^k_{j=1}exp(\frac{\epsilon y'_j}{2})|}{\sum^k_{j=1}exp(\frac{\epsilon y'_j}{2})\cdot\sum^k_{j=1}exp(\frac{\epsilon^* y'_j}{2})}
\end{aligned}
\end{equation}

As $\sum^k_{j=1}exp(\frac{\epsilon y'_j}{2})\cdot\sum^k_{j=1}exp(\frac{\epsilon^* y'_j}{2})>0$, we focus only on $\sum^k_{i=1}|exp(\frac{\epsilon y'_i}{2})\sum^k_{j=1}exp(\frac{\epsilon^* y'_j}{2})-exp(\frac{\epsilon^* y'_i}{2})\sum^k_{j=1}exp(\frac{\epsilon y'_j}{2})|$. Because

\begin{equation}\nonumber
\begin{aligned}
&\sum^k_{i=1}|exp(\frac{\epsilon y'_i}{2})\sum^k_{j=1}exp(\frac{\epsilon^* y'_j}{2})-exp(\frac{\epsilon^* y'_i}{2})\sum^k_{j=1}exp(\frac{\epsilon y'_j}{2})|\\
&=\sum^k_{i=1}|\sum^k_{j=1}exp[\frac{1}{2}(\epsilon y'_i+\epsilon^* y'_j)]-\sum^k_{j=1}exp[\frac{1}{2}(\epsilon^* y'_i+\epsilon y'_j)]|,\\
\end{aligned}
\end{equation}

thus, evaluating $\sum^k_{i=1}|\sum^k_{j=1}exp[\frac{1}{2}(\epsilon y'_i+\epsilon^* y'_j)]-\sum^k_{j=1}exp[\frac{1}{2}(\epsilon^* y'_i+\epsilon y'_j)]|$ is equivalent to evaluating $\sum^k_{i=1}|\sum^k_{j=1}(\epsilon y'_i+\epsilon^* y'_j)-\sum^k_{j=1}(\epsilon^* y'_i+\epsilon y'_j)|$. We have 
\begin{equation}\nonumber
\begin{aligned}
&\sum^k_{i=1}|\sum^k_{j=1}(\epsilon y'_i+\epsilon^* y'_j)-\sum^k_{j=1}(\epsilon^* y'_i+\epsilon y'_j)|\\
&=\sum^k_{i=1}|k\epsilon y'_i+\sum^k_{j=1}\epsilon^* y'_j-(k\epsilon^* y'_i+\sum^k_{j=1}\epsilon y'_j)|\\
&=\sum^k_{i=1}|ky'_i(\epsilon-\epsilon^*)-(\epsilon-\epsilon^*)\sum^k_{j=1}y'_j|\\
&=\sum^k_{i=1}|(\epsilon-\epsilon^*)(ky'_i-\sum^k_{j=1}y'_j)|\\
&=|\epsilon-\epsilon^*|\sum^k_{i=1}|ky'_i-\sum^k_{j=1}y'_j|.\\
\end{aligned}
\end{equation}
Hence, $|\epsilon-\epsilon^*|$ determines the distance between $\mathbf{y}$ and $\mathbf{z}$, and the conclusion of this property is achieved.
\end{proof}

\subsection{Defending against membership inference attacks}\label{sub:defend inference}
In membership inference attacks, the attacker essentially exploits any overfitting of the target model, 
in that models often behave more confidently toward data 
on which they were trained versus data they are seeing for the first time \cite{Shokri17}. 
Thus, the overarching aim of our defense method is to reduce the gap between 
the confidence score vectors of training set members versus non-members. 

To this end, we set a threshold $\tau$ to $0<\tau<1$, which is used to decide whether the target model is confident in an input data record. 
To explain, on the one hand, if $y_{max}>\tau$, the target model will appear confident in the input data record. 
Therefore, according to Property \ref{property4}, the defense method should reduce this confidence to confuse the attacker by setting $\epsilon<\epsilon^*$, 
where $\epsilon^*$ makes $\mathbf{z}=\mathbf{y}$. 
The exact value of $\epsilon$ depends on the expected utility of the perturbed confidence score vector: $||\mathbf{z}-\mathbf{y}||_1$. 
On the other hand, if $y_{max}\leq\tau$, the target model will appear less confident in the input data record. 
Therefore, the defense method should increase the confidence by setting $\epsilon>\epsilon^*$. 

Moreover, according to Property \ref{property2}, as the value of $\epsilon$ is always larger than $0$, 
the order of the scores in $\mathbf{y}$ will be preserved in $\mathbf{z}$ after perturbation, 
i.e., if $y_i$ is $y_{max}$ in $\mathbf{y}$, then $z_i$ is $z_{max}$ in $\mathbf{z}$. 
Typically, the user of the target model will select the predicted class with the highest probability. 
Hence, this defense method does not affect user experience.

\subsection{Defending against model inversion attacks}\label{sub:defend inversion}
In model inversion attacks, the attacker trains an inversion model $G$ to approximate the inverse mapping of the target model $F$ \cite{Yang19}. 
The attack works due to the rich information contained in confidence vectors \cite{Fred15}. 
Thus, this defense method works to reduce the content of that information. 
One way to do this is to increase the distance between the confidence vector and its perturbed version. According to Property \ref{property5}, the distance between $\mathbf{y}$ and $\mathbf{z}$ is based on $|\epsilon-\epsilon^*|$. Hence, to widen the distance, if $\epsilon>\epsilon^*$, then $\epsilon$ should be set a large value. Otherwise, $\epsilon$ should be set a small value. 
In particular, when $\epsilon>\epsilon^*$, increasing the value of $\epsilon$ will increase the variance of the scores in $\mathbf{z}$. 
A high variance makes $\mathbf{z}$ appear as a confident prediction. 
Oppositely, when $\epsilon<\epsilon^*$, increasing the value of $\epsilon$ will decrease the variance of the scores in $\mathbf{z}$. 
A low variance makes $\mathbf{z}$ appear as an unconfident prediction. 
Thus, using a large or small value of $\epsilon$ depends on the original vector $\mathbf{y}$. 
If $\mathbf{y}$ is a confident prediction, $\epsilon$ should be set a small value 
which not only increases the distance between $\mathbf{y}$ and $\mathbf{z}$ but also makes $\mathbf{z}$ appear as an unconfident prediction. 
Otherwise, if $\mathbf{y}$ is an unconfident prediction, $\epsilon$ should be set a large value.

\section{Experiments}\label{sec:experiments}
\subsection{Experimental setup}
\subsubsection{Datasets}
The three datasets we chose for the experiments are broadly used in related studies. These are: 
\begin{itemize}
    \item \textbf{MNIST} \cite{LeCun98}, which consists of $70,000$ handwritten digit images in $10$ classes: 
    $0,1,2,3,4,5,6,7,8,9$. Each image has been resized to $32\times 32$.
    \item \textbf{Fashion-MNIST} \cite{Fashion} consisting of $70,000$ images across $10$ classes, 
    including T-shirt, trouser, pullover, dress, coat, sandal, shirt, sneaker, bag and ankle boot. 
    Again, each image has been resized to $32\times 32$.
    \item \textbf{CIFAR10} \cite{Krizhevsky14} with $60,000$ images across $10$ classes, 
    including airplane, automobile, bird, cat, deer, dog, horse, ship and truck, 
    also resized to $32\times 32$.
\end{itemize}

Table \ref{tab:data} presents the data allocation in our experiments. 
Note that the size of the attacker's training set is $60,000$ for MNIST and Fashion-MNIST and $50,000$ for CIFAR10. 
In the real world, it is infeasible for an attacker to collect a great many samples 
that share the same distribution with the training set of a target model. 
In the experiments, we attempt to build a strong attacker who can collect a large number of samples. 
Then, successfully defeating this strong attacker can prove the effectiveness of the proposed defense method.
\begin{table}[!ht]\scriptsize
\newcommand{\tabincell}[2]{\begin{tabular}{@{}#1@{}}#2\end{tabular}}
	\centering
	\caption{Data allocation}
\begin{tabular}{|c|c|c|c|c|} \hline
Dataset & $D^{\rm train}_{\rm target}$ & $D^{\rm test}_{\rm target}$ & $D^{\rm train}_{\rm attack}$ & $D^{\rm test}_{\rm attack}$ \\ \hline
MNIST & $60,000$ & $10,000$ & $60,000$ & $10,000$\\
Fashion-MNIST & $60,000$ & $10,000$ & $60,000$ & $10,000$\\
CIFAR10 & $50,000$ & $10,000$ & $50,000$ & $10,000$\\ \hline
\end{tabular}
	\label{tab:data}
\end{table}

\subsubsection{Target models}
We used the architecture proposed in \cite{Yang19} for the three datasets, 
which consists of three CNN blocks, two fully-connected layers and a softmax function. 
Each CNN block consists of a convolutional layer followed by a batch normalization layer, 
a max-pooling layer and a ReLU activation layer. 
The two fully-connected layers are added after the CNN blocks. 
Finally, the softmax function is added to the last layer to convert arbitrary neural signals 
into a valid confidence score vector $\mathbf{y}$.


\begin{table*}[]
\centering
\caption{Comprehensive results of the three defense methods}\label{tab:results}
\begin{tabular}{|c|c|c|c|c|c|c|c|c|c|}
\hline
\multirow{2}{8mm}{Dataset} & \multirow{2}{10mm}{Defense methods} & \multicolumn{3}{|c|}{Utility} & Model inversion & \multicolumn{2}{|c|}{Membership inference} & \multicolumn{2}{|c|}{Time overhead} \\ \cline{3-10}
& & Train acc. & Test acc. & Conf. dist. & Inversion error & ML-Leaks & NSH & Train (h) & Test (s) \\\hline
\multirow{7}{8mm}{MNIST} & No defense & $99.94\%$ & $99.63\%$ & $0$ & $0.935$ & $70.4\%$ & $72.3\%$ & $0$ & $0$ \\
                     & \textbf{DP-based} ($\epsilon=0.1$) & $\mathbf{99.94\%}$ & $\mathbf{99.63\%}$ & $\mathbf{0.948}$ & $\mathbf{0.924}$ & $\mathbf{49.8\%}$ & $\mathbf{50.3\%}$ & $\mathbf{0}$ & $\mathbf{6.96e-05}$ \\
                     & \textbf{DP-based} ($\epsilon=0.7$) & $\mathbf{99.94\%}$ & $\mathbf{99.63\%}$ & $\mathbf{0.783}$ & $\mathbf{0.925}$ & $\mathbf{50.4\%}$ & $\mathbf{50.9\%}$ & $\mathbf{0}$ & $\mathbf{6.95e-05}$ \\
                     & \textbf{DP-based} ($\epsilon=1.4$) & $\mathbf{99.94\%}$ & $\mathbf{99.63\%}$ & $\mathbf{0.535}$ & $\mathbf{0.927}$ & $\mathbf{51.1\%}$ & $\mathbf{51.6\%}$ & $\mathbf{0}$ & $\mathbf{6.97e-05}$ \\
                     & \textbf{DP-based} ($\epsilon=2.0$) & $\mathbf{99.94\%}$ & $\mathbf{99.63\%}$ & $\mathbf{0.328}$ & $\mathbf{0.928}$ & $\mathbf{52.0\%}$ & $\mathbf{52.5\%}$ & $\mathbf{0}$ & $\mathbf{6.95e-05}$ \\
                     & MemGuard & $99.94\%$ & $99.63\%$ & $0.392$ & $0.908$ & $57.2\%$ & $55.3\%$ & $1.02$ & $2.53$ \\
                     & Purification & $99.87\%$ & $99.55\%$ & $0.287$ & $0.925$ & $65.5\%$ & $68.2\%$ & $6.31$ & $1.22e-04$ \\\hline
\multirow{7}{8mm}{Fashion-MNIST} & No defense & $99.74\%$ & $92.73\%$ & $0$ & $0.708$ & $69.3\%$ & $71.3\%$ & $0$ & $0$ \\
                             & \textbf{DP-based} ($\epsilon=0.1$) & $\mathbf{99.74\%}$ & $\mathbf{92.73\%}$ & $\mathbf{0.948}$ & $\mathbf{0.691}$ & $\mathbf{49.7\%}$ & $\mathbf{50.1\%}$ & $\mathbf{0}$ & $\mathbf{6.91e-05}$ \\
                             & \textbf{DP-based} ($\epsilon=0.7$) & $\mathbf{99.74\%}$ & $\mathbf{92.73\%}$ & $\mathbf{0.774}$ & $\mathbf{0.694}$ & $\mathbf{50.5\%}$ & $\mathbf{50.8\%}$ & $\mathbf{0}$ & $\mathbf{6.91e-05}$ \\
                             & \textbf{DP-based} ($\epsilon=1.4$) & $\mathbf{99.74\%}$ & $\mathbf{92.73\%}$ & $\mathbf{0.514}$ & $\mathbf{0.695}$ & $\mathbf{51.2\%}$ & $\mathbf{51.7\%}$ & $\mathbf{0}$ & $\mathbf{6.90e-05}$ \\
                             & \textbf{DP-based} ($\epsilon=2.0$) & $\mathbf{99.74\%}$ & $\mathbf{92.73\%}$ & $\mathbf{0.316}$ & $\mathbf{0.697}$ & $\mathbf{51.9\%}$ & $\mathbf{52.5\%}$ & $\mathbf{0}$ & $\mathbf{6.93e-05}$ \\
                             & MemGuard & $99.74\%$ & $92.73\%$ & $0.442$ & $0.697$ & $55.2\%$ & $54.2\%$ & $1.1$ & $2.62$ \\
                             & Purification & $99.68\%$ & $92.65\%$ & $0.301$ & $0.693$ & $65.1\%$ & $67.1\%$ & $6.40$ & $1.22e-04$ \\\hline
\multirow{7}{8mm}{CIFAR10} & No defense & $99.14\%$ & $81.63\%$ & $0$ & $0.392$ & $65.6\%$ & $63.2\%$ & $0$ & $0$ \\
                     & \textbf{DP-based} ($\epsilon=0.1$) & $\mathbf{99.14\%}$ & $\mathbf{81.63\%}$ & $\mathbf{0.947}$ & $\mathbf{0.426}$ & $\mathbf{50.1\%}$ & $\mathbf{49.3\%}$ & $\mathbf{0}$ & $\mathbf{7.41e-05}$ \\
                     & \textbf{DP-based} ($\epsilon=0.7$) & $\mathbf{99.14\%}$ & $\mathbf{81.63\%}$ & $\mathbf{0.787}$ & $\mathbf{0.428}$ & $\mathbf{50.8\%}$ & $\mathbf{49.9\%}$ & $\mathbf{0}$ & $\mathbf{7.40e-05}$ \\
                     & \textbf{DP-based} ($\epsilon=1.4$) & $\mathbf{99.14\%}$ & $\mathbf{81.63\%}$ & $\mathbf{0.495}$ & $\mathbf{0.429}$ & $\mathbf{51.5\%}$ & $\mathbf{50.6\%}$ & $\mathbf{0}$ & $\mathbf{7.39e-05}$ \\
                     & \textbf{DP-based} ($\epsilon=2.0$) & $\mathbf{99.14\%}$ & $\mathbf{81.63\%}$ & $\mathbf{0.305}$ & $\mathbf{0.502}$ & $\mathbf{52.3\%}$ & $\mathbf{51.5\%}$ & $\mathbf{0}$ & $\mathbf{7.42e-05}$ \\
                     & MemGuard & $99.14\%$ & $81.63\%$ & $0.327$ & $0.431$ & $53.9\%$ & $52.9\%$ & $5.1$ & $2.7$ \\
                     & Purification & $99.09\%$ & $81.56\%$ & $0.284$ & $0.403$ & $61.2\%$ & $58.0\%$ & $8.24$ & $1.24e-04$ \\\hline                             
\end{tabular}
\end{table*}

\subsubsection{Attack models}
We used two different attack models for the membership inference attacks. 

\noindent\textbf{ML-leak attack} \cite{Salem19}. This is a confidence-based membership inference attack. 
The attacker has no knowledge of the $D^{\rm train}_{\rm attack}$ and $D^{\rm test}_{\rm attack}$ membership labels. 
Thus, a shadow model has to be trained to replicate the target model; 
then the attack model must be trained based on the confidence scores of the shadow model. 
To guarantee the strongest attack, the shadow model should have the same architecture as the target model. 
We use the same architecture as in \cite{Salem19} for the attack model 
which is a multi-layer perceptron with a 64-unit hidden layer and a sigmoid output layer.
    
\noindent\textbf{NSH attack} \cite{Nasr18}. This is a combined 
confidence/label-based membership inference attack. 
The attacker has knowledge of the $D^{\rm train}_{\rm attack}$ and $D^{\rm test}_{\rm attack}$ membership labels. 
Thus, no shadow model is needed. 
The adversary can simply directly query the target model to receive the confidence score vectors. 
The architecture is the same as in \cite{Nasr18} 
which consists of three neural networks. 
The first has the layers of size: [100,1024,512,64] 
and takes confidence score vectors as input. 
The second has the layers of size: [100,512,64] 
and takes labels as input. 
The third network has the layers of size: [256,64,1] 
and takes the outputs of the first and second networks as input.

For the model inversion attacks, we adopted the model proposed in \cite{Yang19}. 

\noindent\textbf{Adversarial model inversion attack} \cite{Yang19}. 
The adversary trains an inversion model to infer reconstruction of the input data record. 
We use the same inversion model architecture as in \cite{Yang19} 
which consists of four transposed CNN blocks. 
The first three blocks each has a transposed convolutional layer 
followed by a sigmoid activation function that converts neural signals into real values in $[0,1]$.

\subsubsection{Comparison defense methods} 
For comparison, we chose two existing defense methods 
that have been experimentally proven as state-of-the-art \cite{Yang20}. 
These are the \textbf{MemGuard} method \cite{Jia19} for the membership inference attacks 
and the \textbf{Purification} method \cite{Yang20} for the model inversion attacks.

\noindent\textbf{MemGuard} \cite{Jia19}. The defense model consists of three hidden layers: [256,128,64].
It uses ReLU in the hidden layers and sigmoid in the output layer. 

\noindent\textbf{Purification} \cite{Yang20}. The defense model used is an autoencoder 
with the layers of size [10,7,4,7,10]. 
Every hidden layer uses a ReLU activation function and batch normalization.

\subsubsection{Evaluation metrics}
We used five metrics to evaluate the performance and efficiency of these defense methods following the specifications outlined in \cite{Yang20}. 
A brief description of each follows. 

\noindent\textbf{Classification accuracy}. This metric demonstrates the performance of target models on classification tasks.
It is measured on the training set $D^{\rm train}_{\rm target}$ 
and test set $D^{\rm test}_{\rm target}$ of target models. 

\noindent\textbf{Confidence score distortion}. This metric shows the utility of the perturbed confidence score vectors. 
As analyzed in \cite{Bozkir20}, the utility is measured by computing the $l_2$ norm of the distance 
between an original confidence score vector, predicted by a target model, 
and a perturbed confidence score vector, computed using the defense method.

\noindent\textbf{Membership inference accuracy}. This metric shows the classification accuracy of attack models 
in predicting the membership of input data records. 
It is measured on $D^{\rm train}_{\rm target}-D^{\rm train}_{\rm attack}$, i.e., members, 
and $D^{\rm test}_{\rm target}-D^{\rm test}_{\rm attack}$, i.e., non-members. 

\noindent\textbf{Inversion error}. This metric shows the reconstruction accuracy of the attack model 
in reconstructing the input data records. 
It is measured by computing the mean squared error between the original input data record 
and the reconstructed data record.
It is measured on datasets $D^{\rm train}_{\rm target}$ and $D^{\rm test}_{\rm target}$.

\noindent\textbf{Time overhead}. This metric indicates the efficiency of the defense methods. 
It is measured by reporting the extra time consumed by applying defense methods. 
The time overhead includes both the training time of any models introduced by these defense methods
and the test time when using these models.

\subsection{Experimental results}
\subsubsection{Comparison with existing defense methods}
The full results of the comparisons appear in Table \ref{tab:results}.  
As shown, our method significantly reduced the membership inference accuracy 
with no classification accuracy loss and $0$ training time. 
By comparison, the other defense methods had a very high training time. 
The test time overhead of our method, i.e., perturbing the confidence score vectors, 
was also much lower than the other methods. 
MemGuard's test time is consumed by solving an optimization problem, 
while for the Purification method, 
it is incurred in computing a forward pass of the defense model. 

After applying our method, the membership inference accuracy drops 
about $20\%$ on MNIST and Fashion-MNIST datasets, and about $15\%$ on CIFAR10 dataset. 
Thus, our method renders the results of a membership inference attack down to little more than a random guess for the attacker, 
which is better than the other two methods. 
Additionally, our method yielded a larger distortion in confidence scores than the other two methods. 
We reason this is because those two methods treat perturbing the confidence scores as an optimizing problem. 
Our method is also capable of less distortion, simply by setting a larger $\epsilon$ value. However, increase in the value of $\epsilon$ will incur the rise of membership inference attack success rates. Thus, here is a trade-off between the confidence score distortion and the membership inference attack success rate. During the experiments, we found that confidence score distortion was not a critical factor in classification accuracy. This is due to the fact that a classifier typically selects the label with the highest score as the output. Hence, any defense method can guarantee that the classification accuracy level will be maintained, as long as there is also a guarantee that the scores in the perturbed vector will be in the same order as the original vector. Therefore, we can focus on tuning $\epsilon$ value to reduce attack success rates. 


\begin{figure}[ht]
\centering
	\includegraphics[scale=0.32]{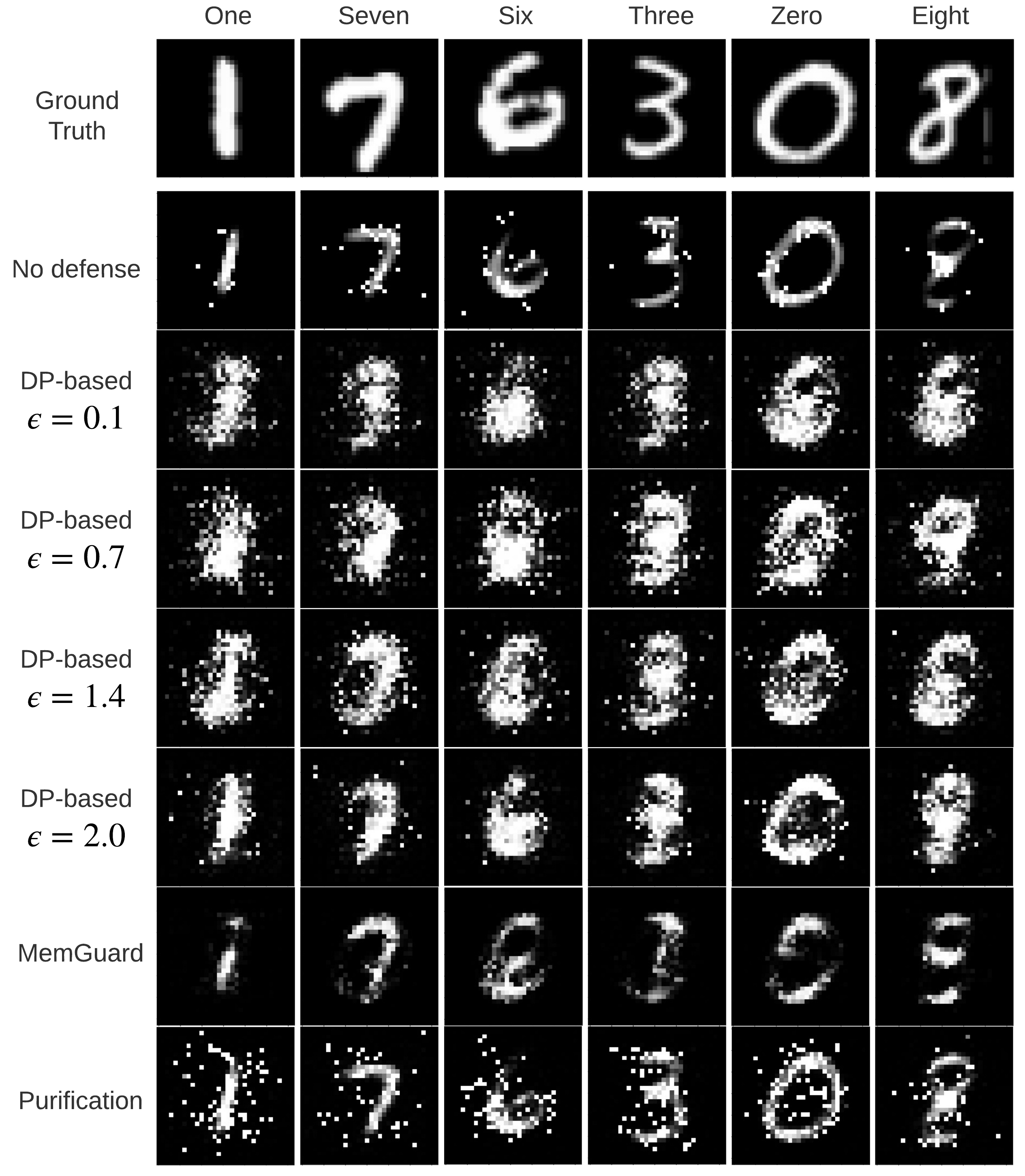}
	\caption{Defenses against the model inversion attacks on MNIST. 
	Rows 3-6 show the results for our DP-based method with different $\epsilon$ values.}
	\label{fig:MNIST}
\end{figure}

\begin{figure}[ht]
\centering
	\includegraphics[scale=0.32]{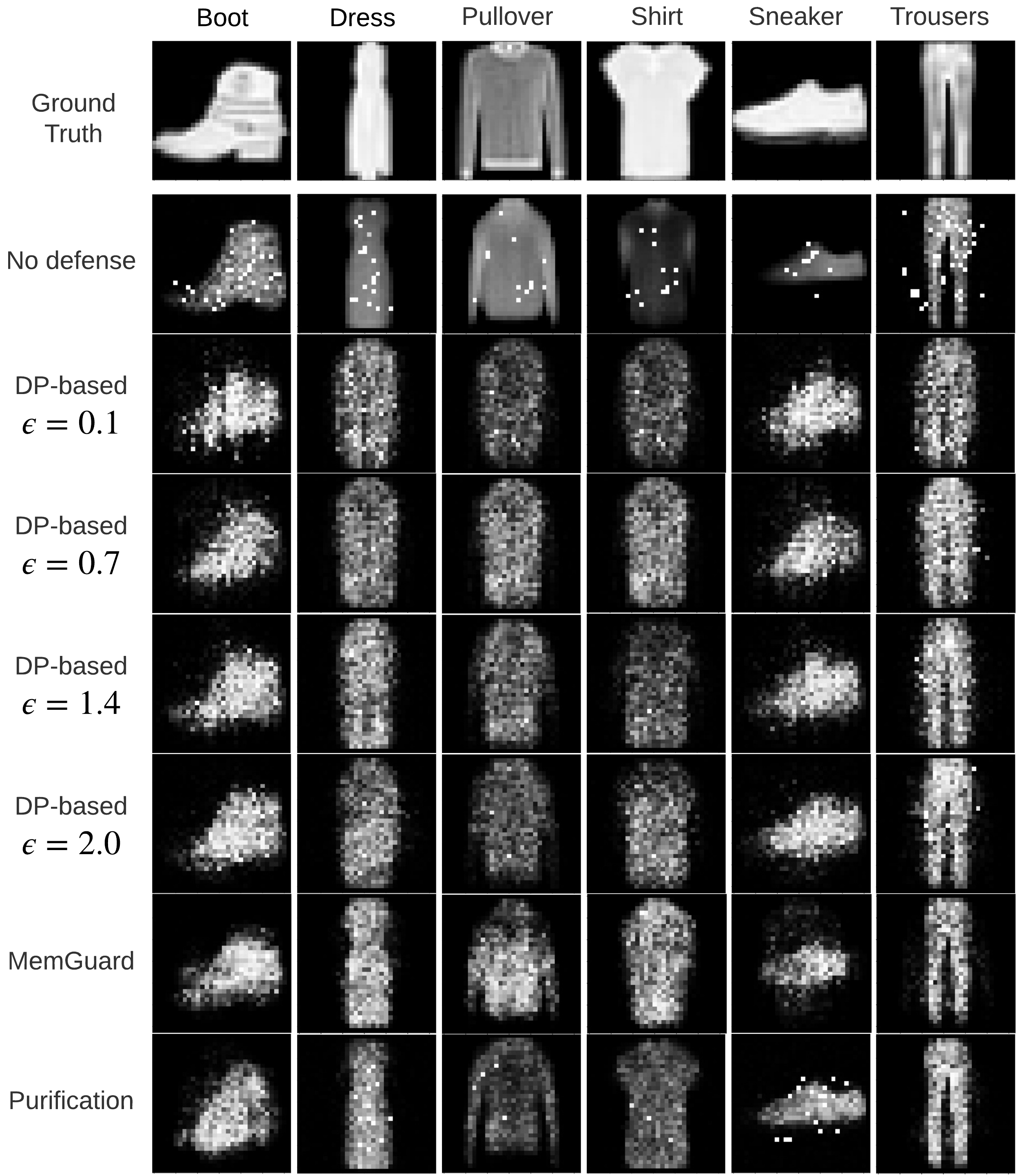}
	\caption{Defenses against the model inversion attacks on Fashion-MNIST. 
	Rows 3-6 show the results for our DP-based method with different $\epsilon$ values.}
	\label{fig:FMNIST}
\end{figure}

The results for model inversion attacks are shown in Figs. \ref{fig:MNIST}, \ref{fig:FMNIST} and \ref{fig:CIFAR10}. 
We drew three interesting findings from these experiments. 
First, the reconstructed images reveal the average features of one class of images. 
This means that the images belonging to one class have a very similar reconstructed image. 
The second finding is that the quality of reconstructed images depends heavily on 
the color and background of the original images. 
A grey-scale image with no background usually gives rise to a much better reconstructed version 
than a colorful image with a very rich background. 
As shown in the second row in Fig. \ref{fig:CIFAR10}, even without defense, 
the model inversion attack method \cite{Yang19} cannot precisely reconstruct images in CIFAR10. 
Then, as shown in the third row, using our defense method can make the attack results even worse. 
The third finding is that inversion error is not critical to the quality of the reconstructed images. 
This is because the aim of model inversion attack is usually for human perception. 
For example, slightly rotating or adding a small amount of noise 
to an image has a negligible impact on human perception 
but may induce huge mean squared errors, i.e., inversion errors. 
Thus, even if a reconstructed image has a huge inversion error compared with the original one, 
it may still be recognizable to a person. 
This finding also explains the converse that some reconstructed images 
have very bad quality despite small inversion errors. 
The use of an image-specific evaluation metric, 
e.g., structural similarity index measure (SSIM) \cite{Wang04}, is left to future work.

\begin{figure}[ht]
\centering
	\includegraphics[scale=0.49]{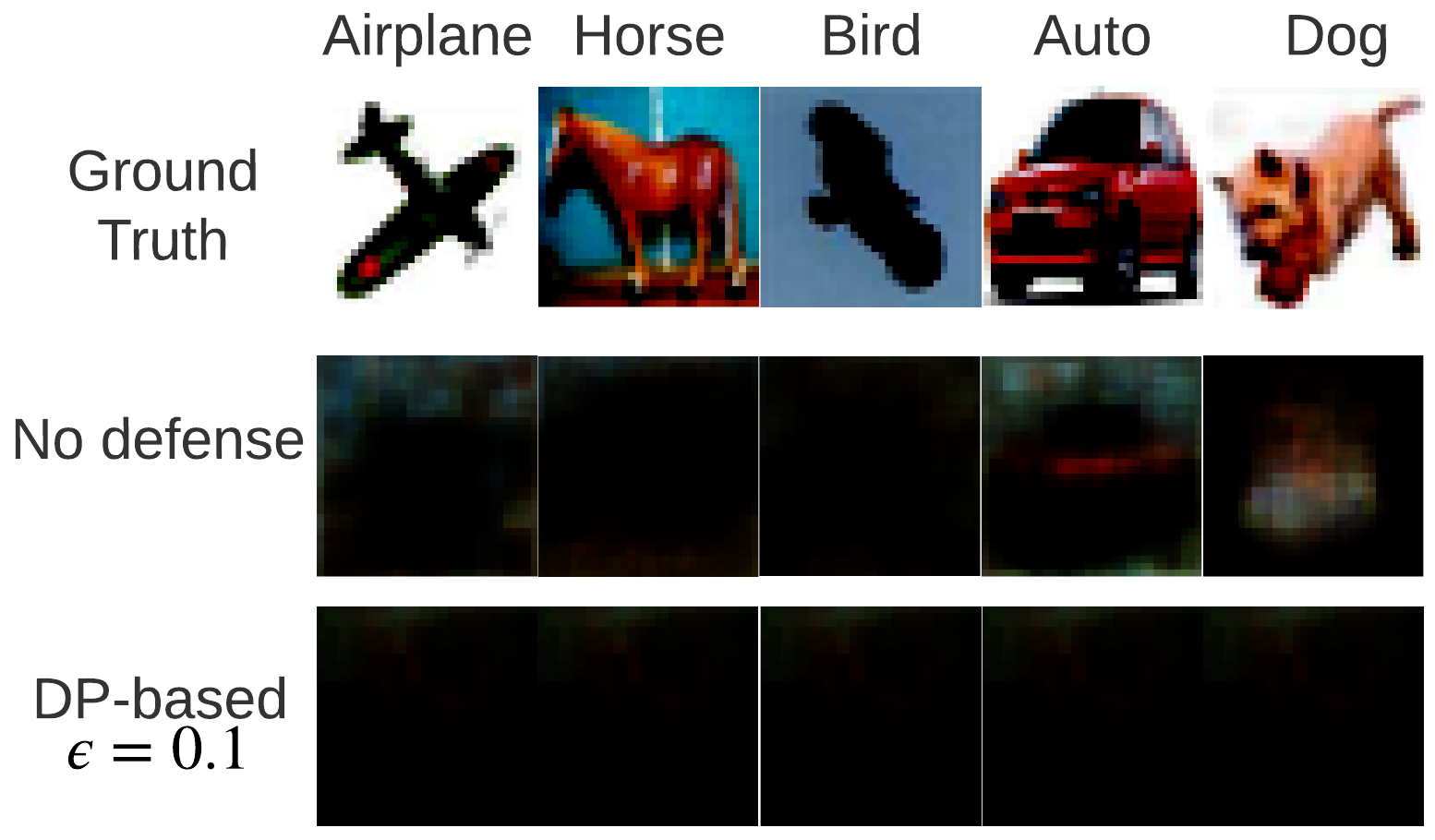}
	\caption{Defenses against the model inversion attacks on CIFAR10. 
	Row 3 shows the results for our DP-based method with $\epsilon=0.1$.}
	\label{fig:CIFAR10}
\end{figure}

From Figs. \ref{fig:MNIST} and \ref{fig:FMNIST}, we can see that, 
without any defense mechanism, the attacker can infer very accurate reconstructions of the images. 
However, with a defense, the inversion results become vague. 
Our method ``averages'' the images 
using differential privacy making them look more the same by removing 
useful information from the confidence score vectors. 
Moreover, as the $\epsilon$ value increases, the reconstructed images become clearer. 
This can be explained by the fact that when $\epsilon<\epsilon^*$, 
a larger $\epsilon$ value introduces less perturbation 
which implies less information removal. 

Notably, MemGuard achieved very good results, 
even though it is not designed to defend against model inversion attacks. 
It also removed the ``bright points'' from the reconstructed images. 
These ``bright points'' represent the average features of one class. 
Moreover, the Purification method also did a commendable obfuscation job.  

\begin{figure}[ht]
	\begin{minipage}{1\textwidth}
   \subfigure[\scriptsize{Classification accuracy in MNIST}]{
    \includegraphics[scale=0.195]{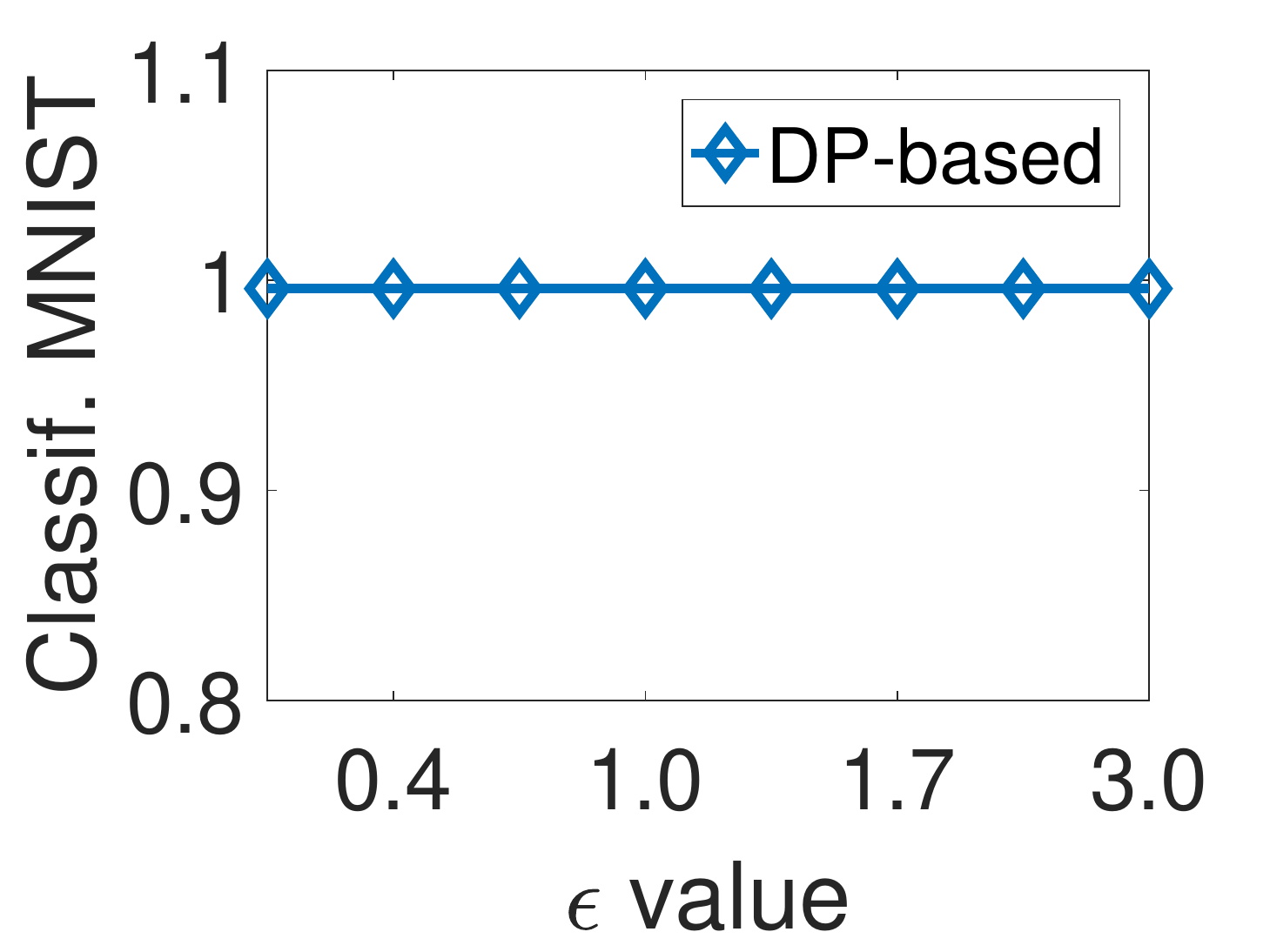}
			\label{fig:CA-MNIST}}
	\subfigure[\scriptsize{Classification accuracy in Fashion-MNIST}]{
    \includegraphics[scale=0.195]{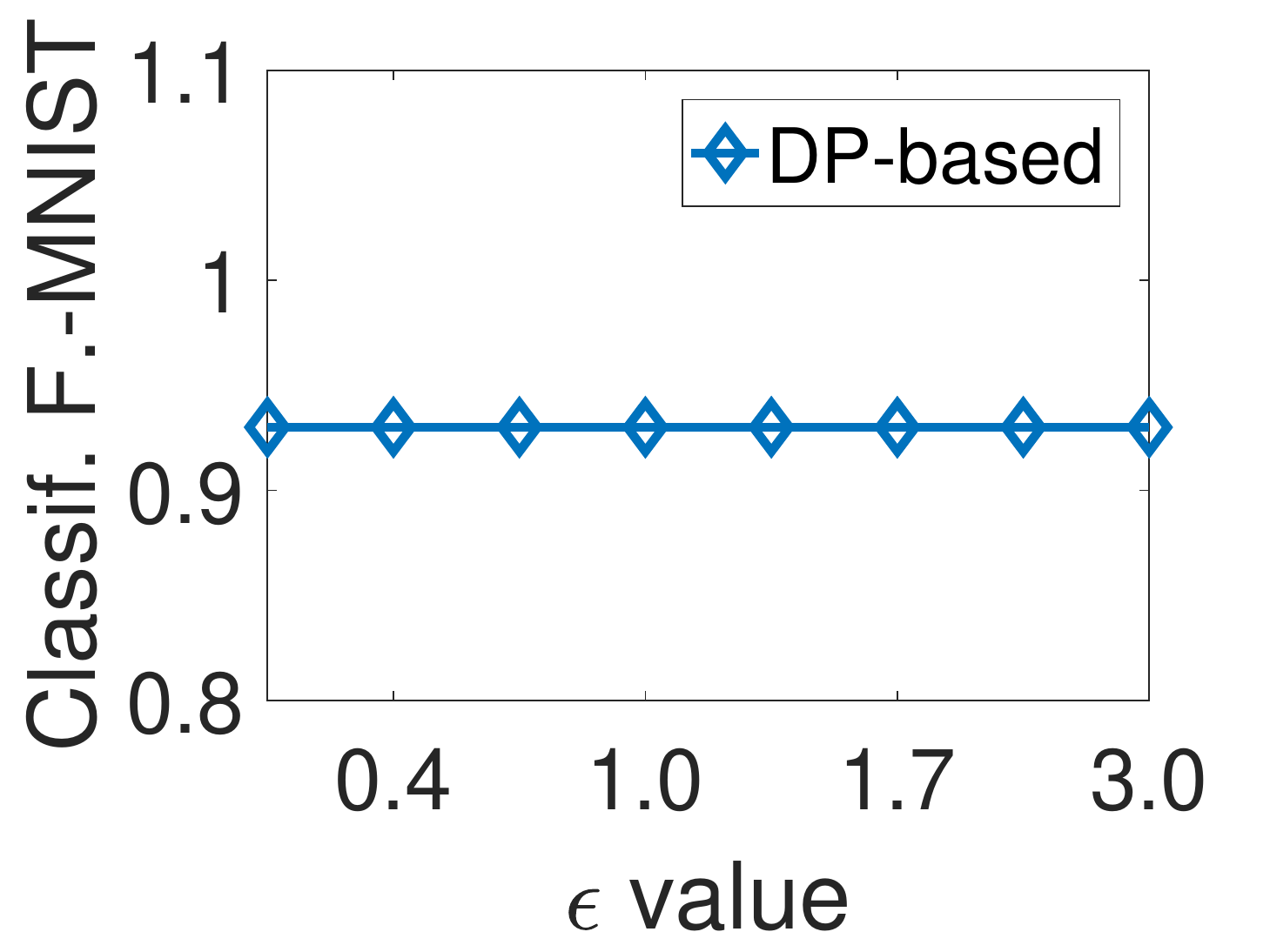}
			\label{fig:CA-FMNIST}}
	\subfigure[\scriptsize{Classification accuracy in CIFAR10}]{
    \includegraphics[scale=0.195]{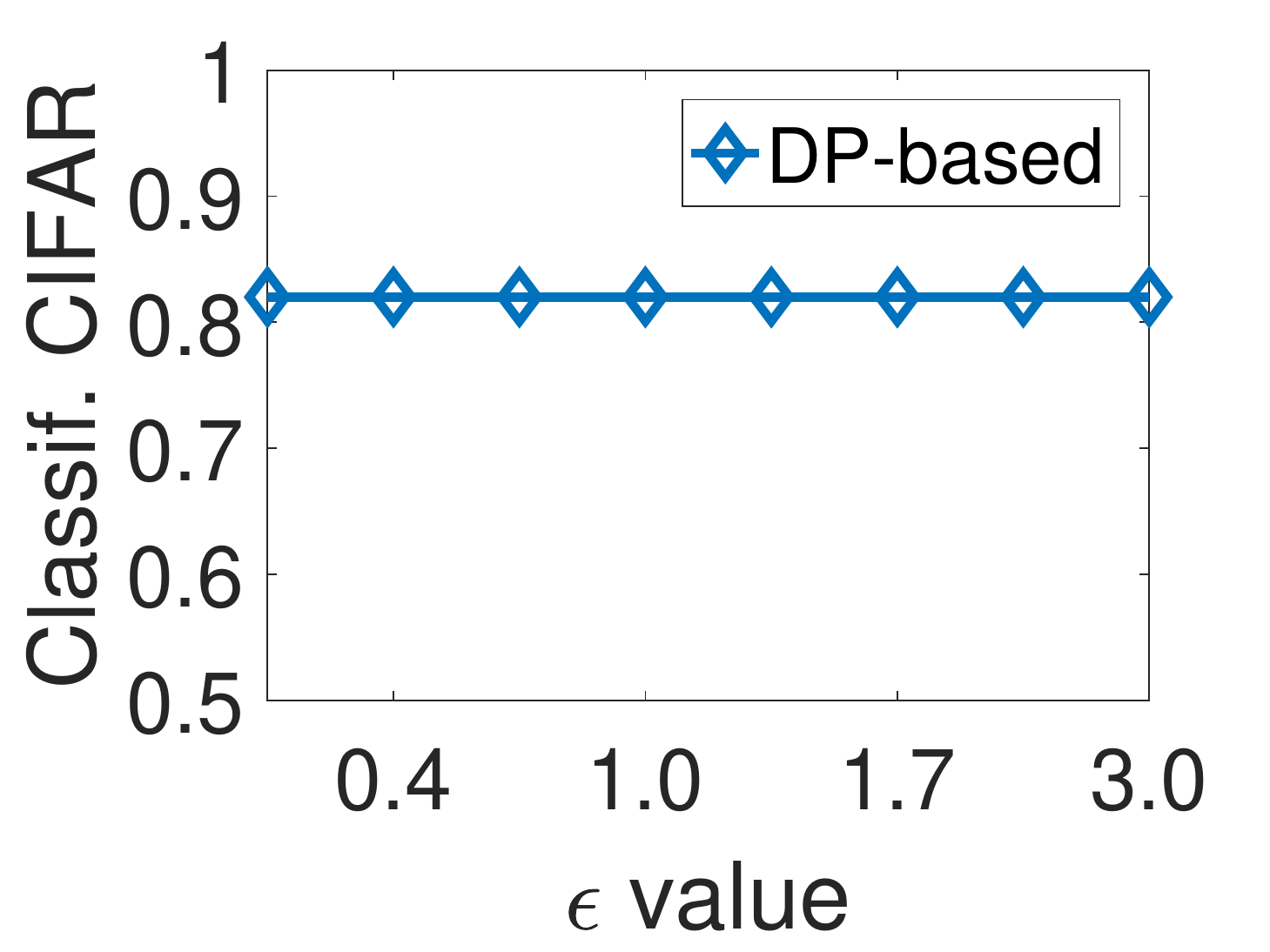}
			\label{fig:CA-CIFAR}}
    \end{minipage}
	\caption{Classification accuracy with varying $\epsilon$ values}
	\label{fig:CA}
\end{figure}

\begin{figure}[ht]
	\begin{minipage}{1\textwidth}
   \subfigure[\scriptsize{Confidence score distortion in MNIST}]{
    \includegraphics[scale=0.195]{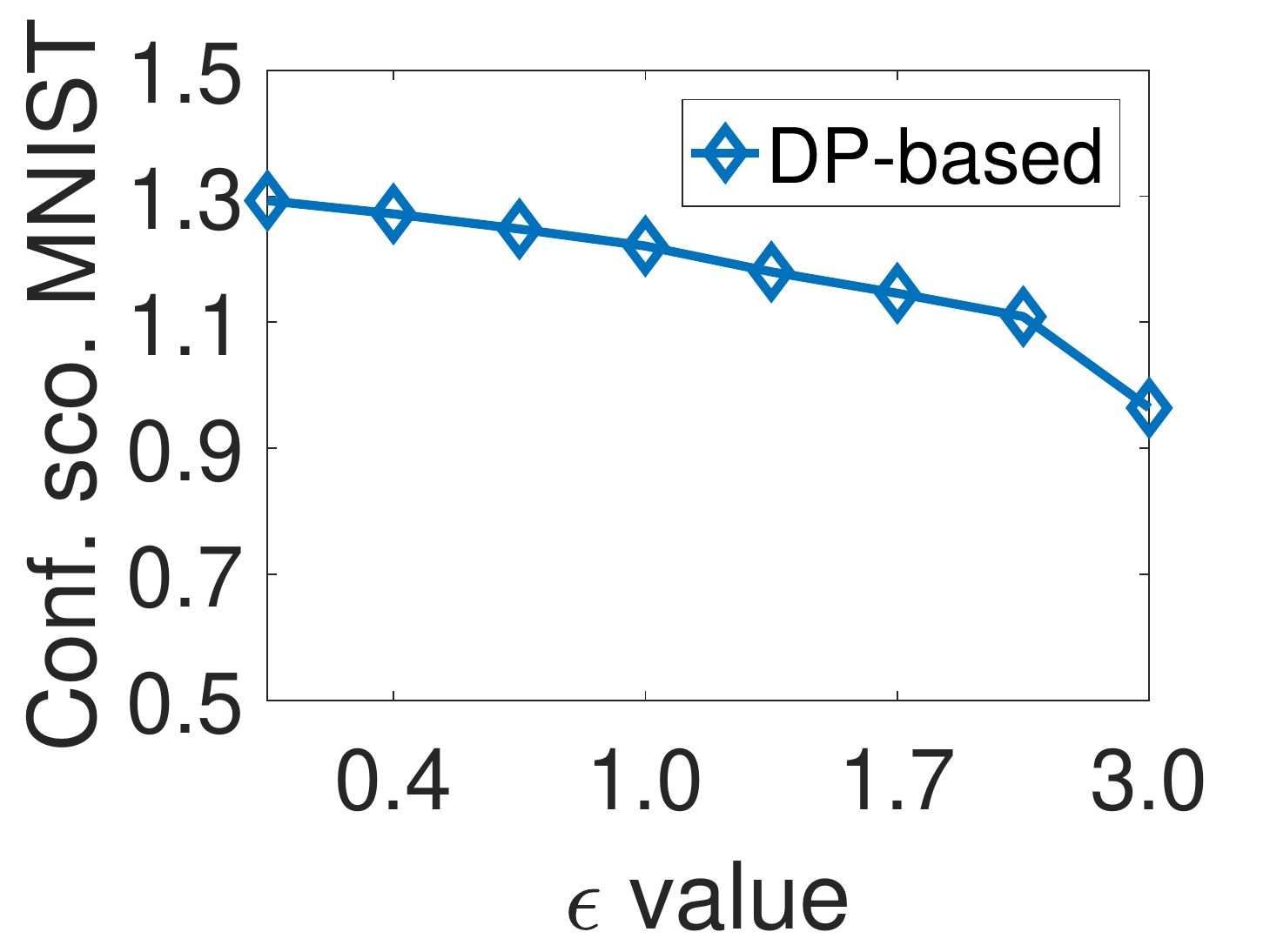}
			\label{fig:CSD-MNIST}}
	\subfigure[\scriptsize{Confidence score distortion in Fashion-MNIST}]{
    \includegraphics[scale=0.195]{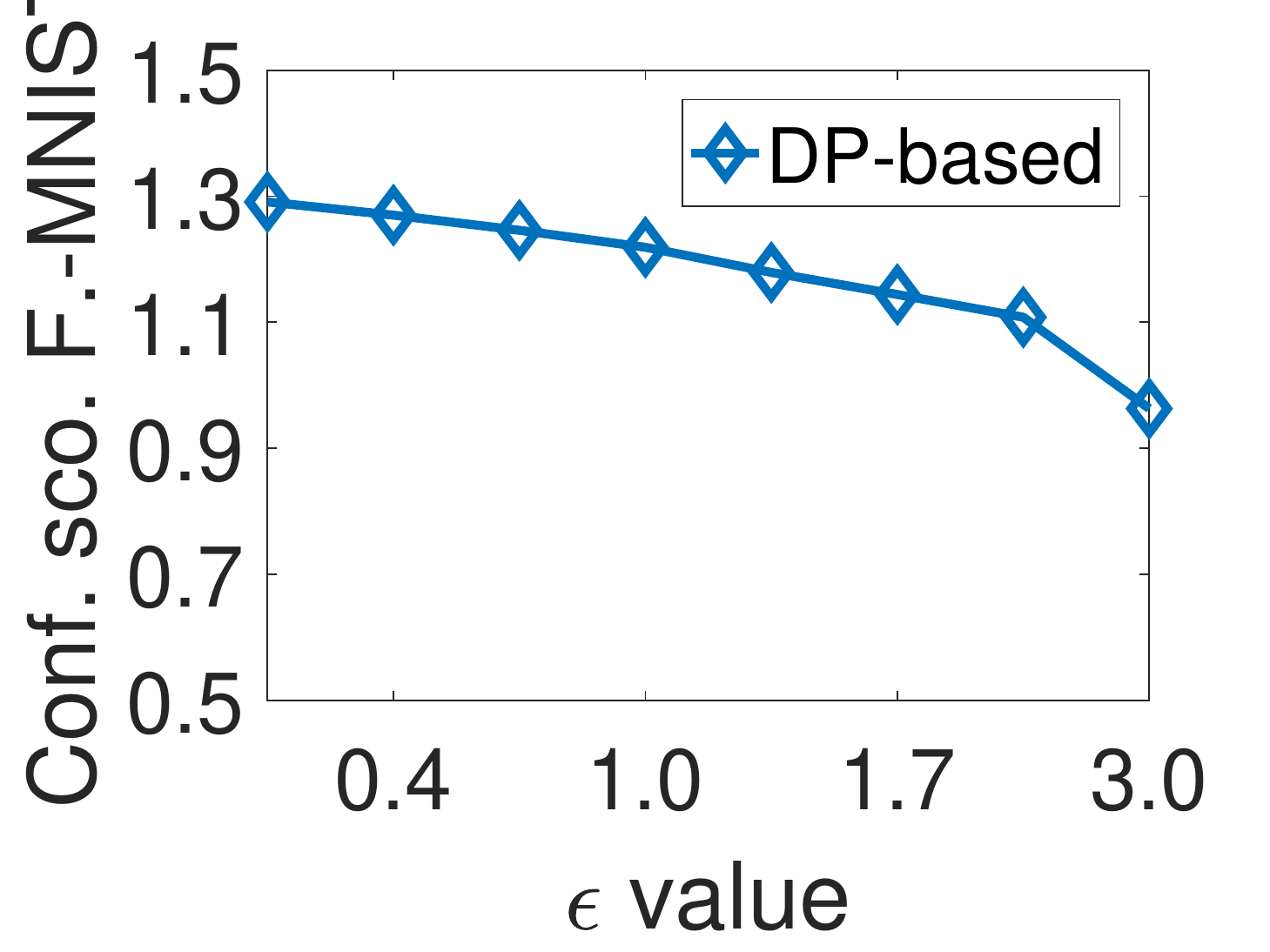}
			\label{fig:CSD-FMNIST}}
	\subfigure[\scriptsize{Confidence score distortion in CIFAR10}]{
    \includegraphics[scale=0.195]{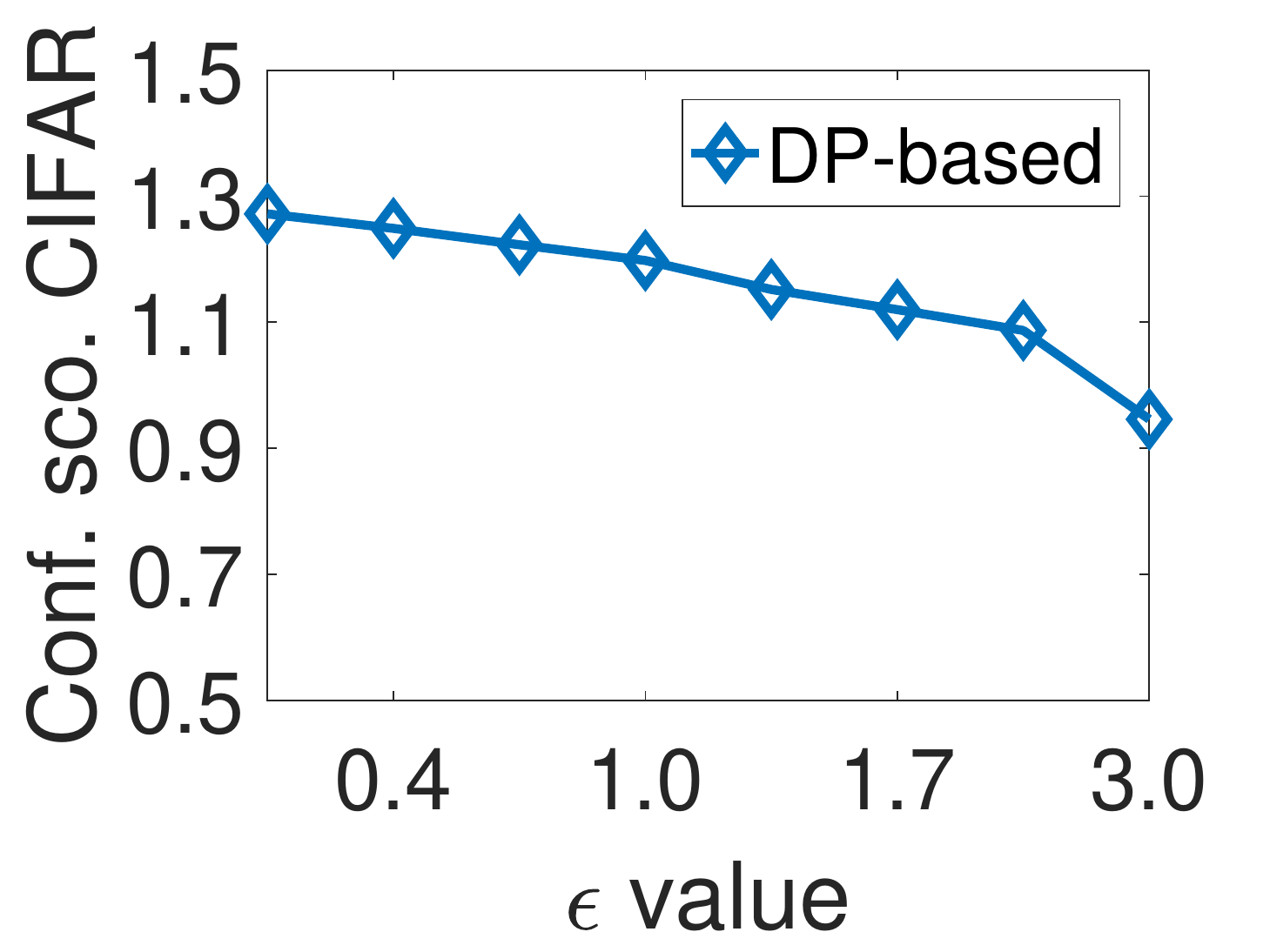}
			\label{fig:CSD-CIFAR}}
    \end{minipage}
	\caption{Confidence score distortion with varying $\epsilon$ values}
	\label{fig:CSD}
\end{figure}

\begin{figure}[ht]
	\begin{minipage}{1\textwidth}
   \subfigure[\scriptsize{Membership inference accuracy in MNIST}]{
    \includegraphics[scale=0.195]{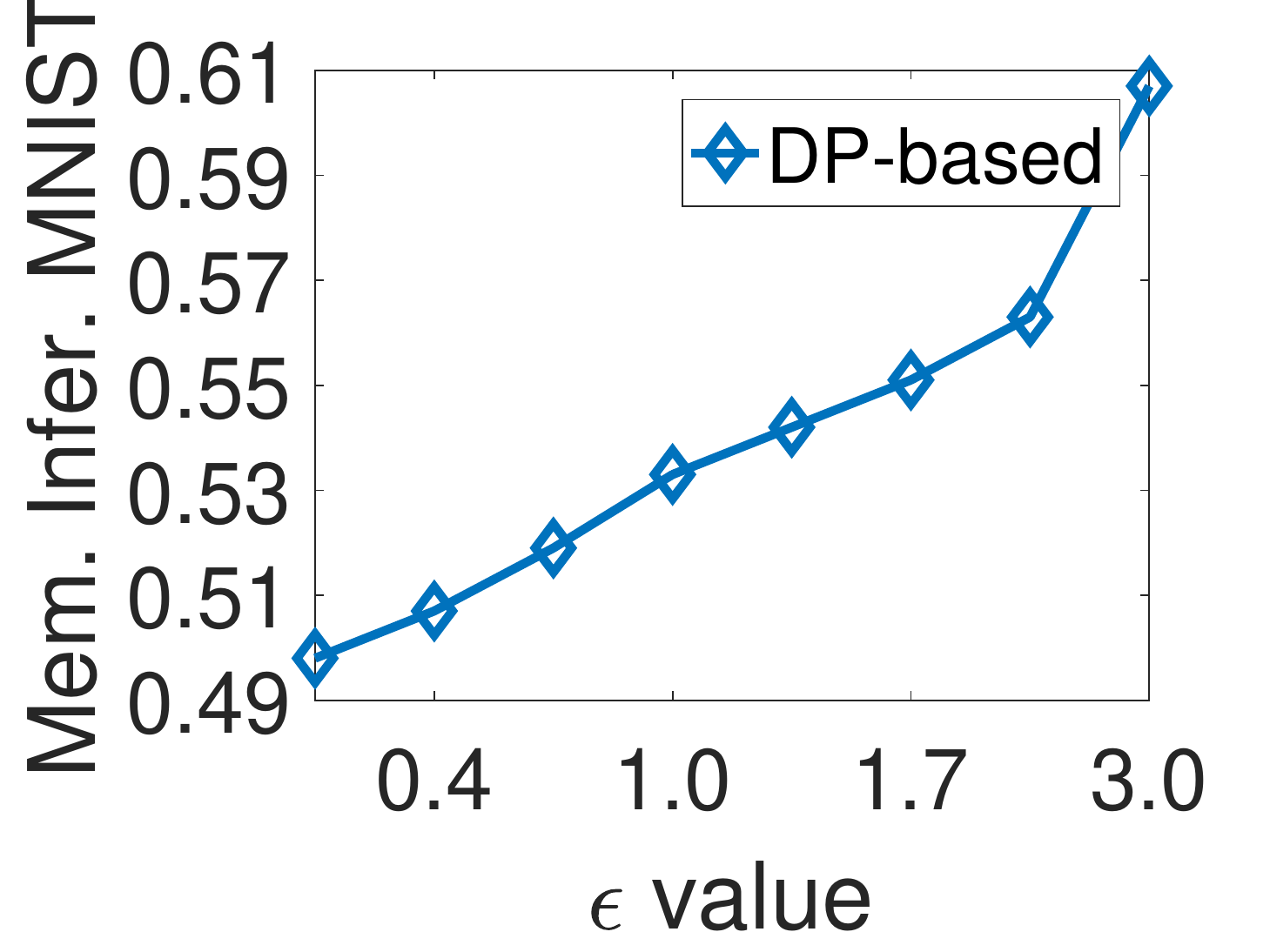}
			\label{fig:MIA-MNIST}}
	\subfigure[\scriptsize{Membership inference accuracy in Fashion-MNIST}]{
    \includegraphics[scale=0.195]{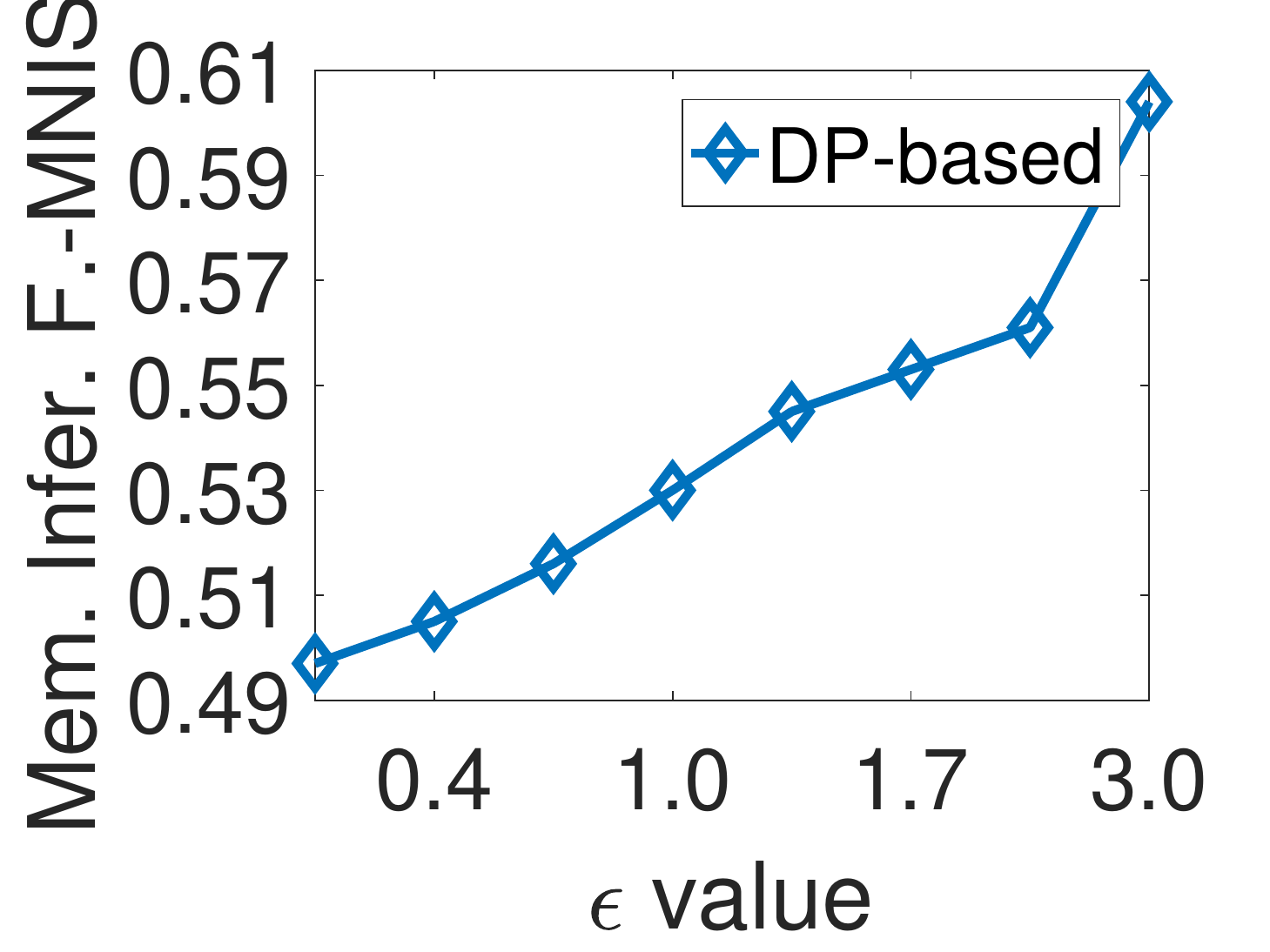}
			\label{fig:MIA-FMNIST}}
	\subfigure[\scriptsize{Membership inference accuracy in CIFAR10}]{
    \includegraphics[scale=0.195]{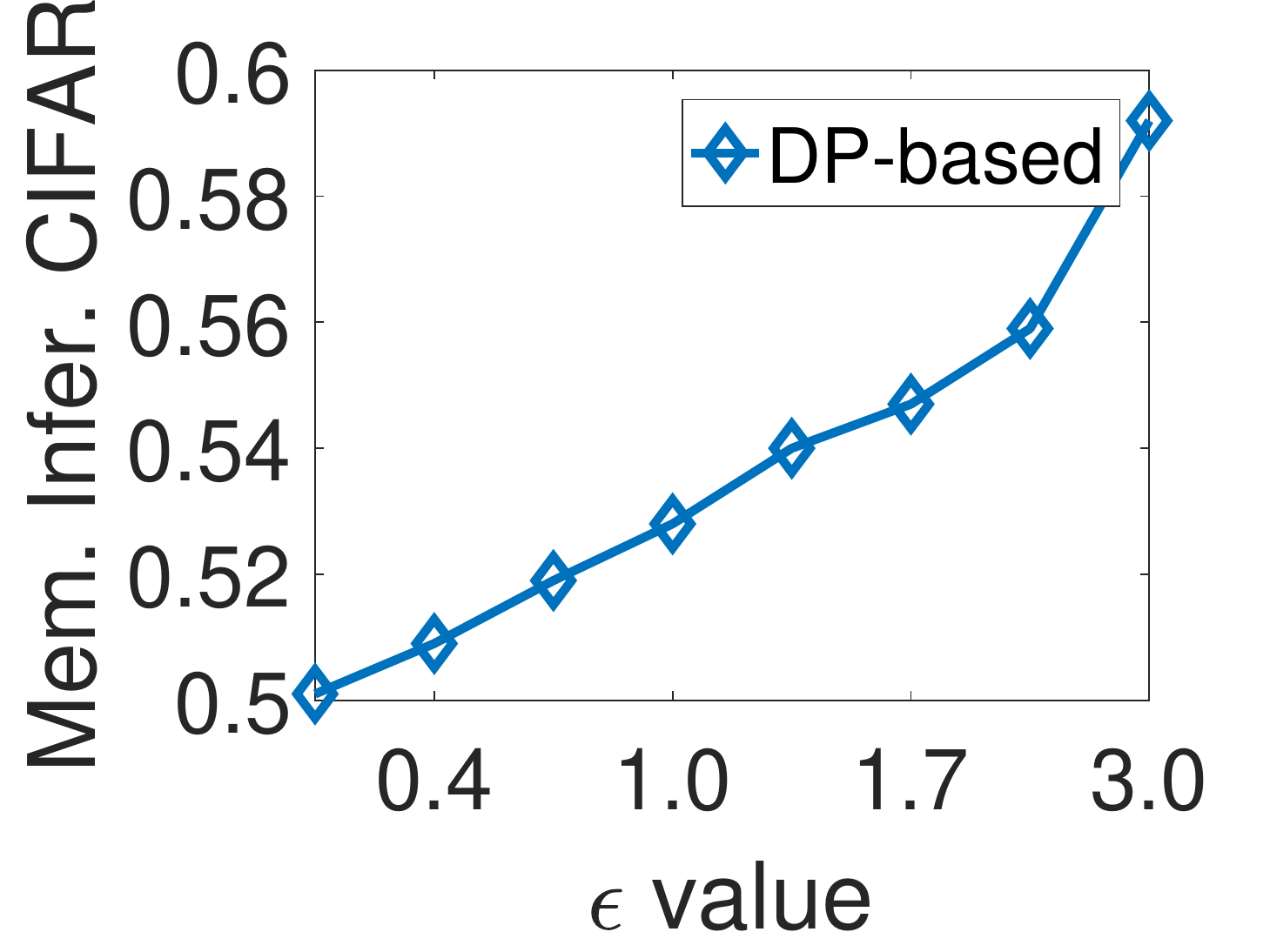}
			\label{fig:MIA-CIFAR}}
    \end{minipage}
	\caption{Membership inference accuracy with varying $\epsilon$ values}
	\label{fig:MIA}
\end{figure}

\begin{figure}[ht]
	\begin{minipage}{1\textwidth}
   \subfigure[\scriptsize{Inversion error in MNIST}]{
    \includegraphics[scale=0.195]{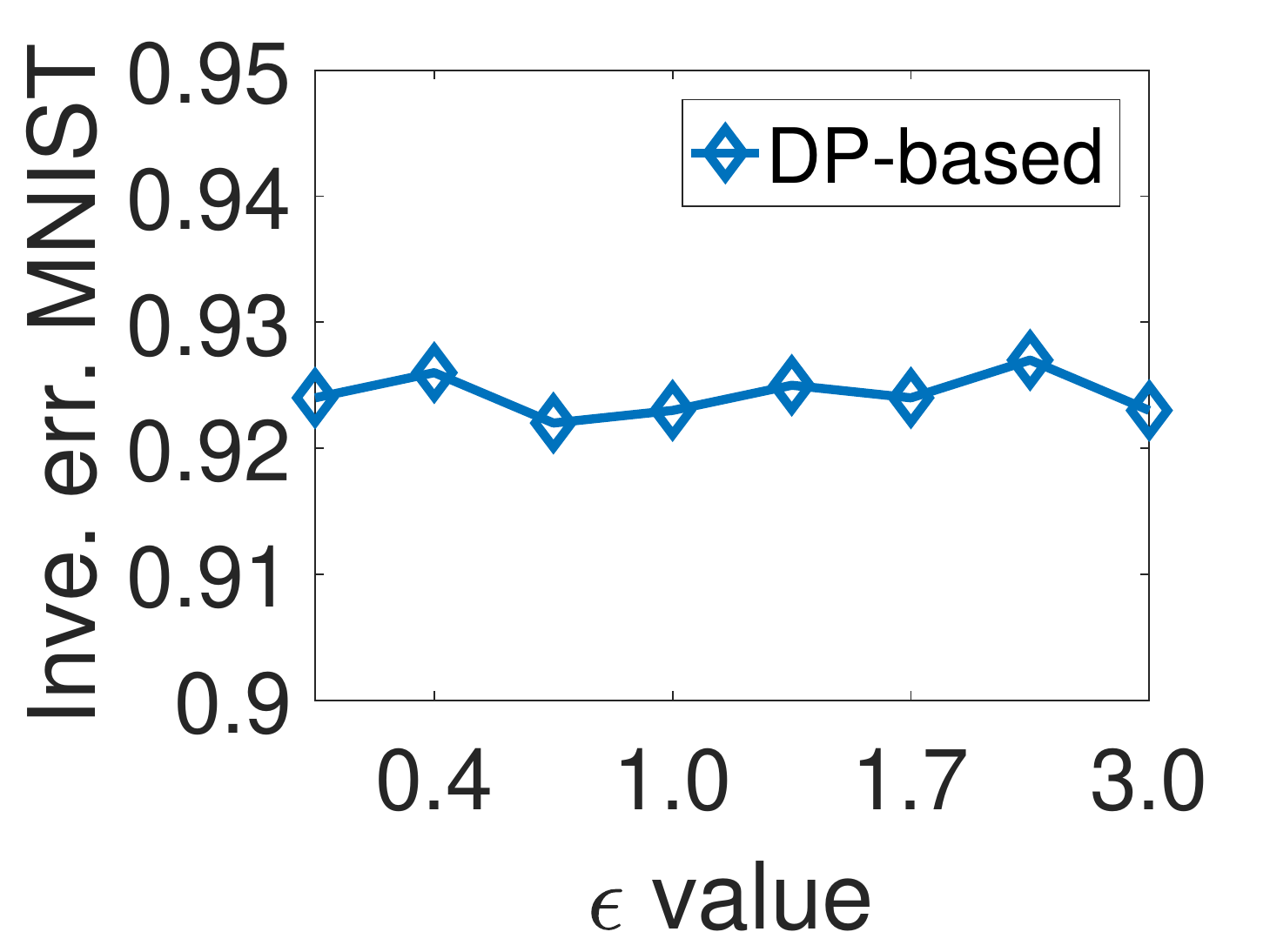}
			\label{fig:IE-MNIST}}
	\subfigure[\scriptsize{Inversion error in Fashion-MNIST}]{
    \includegraphics[scale=0.195]{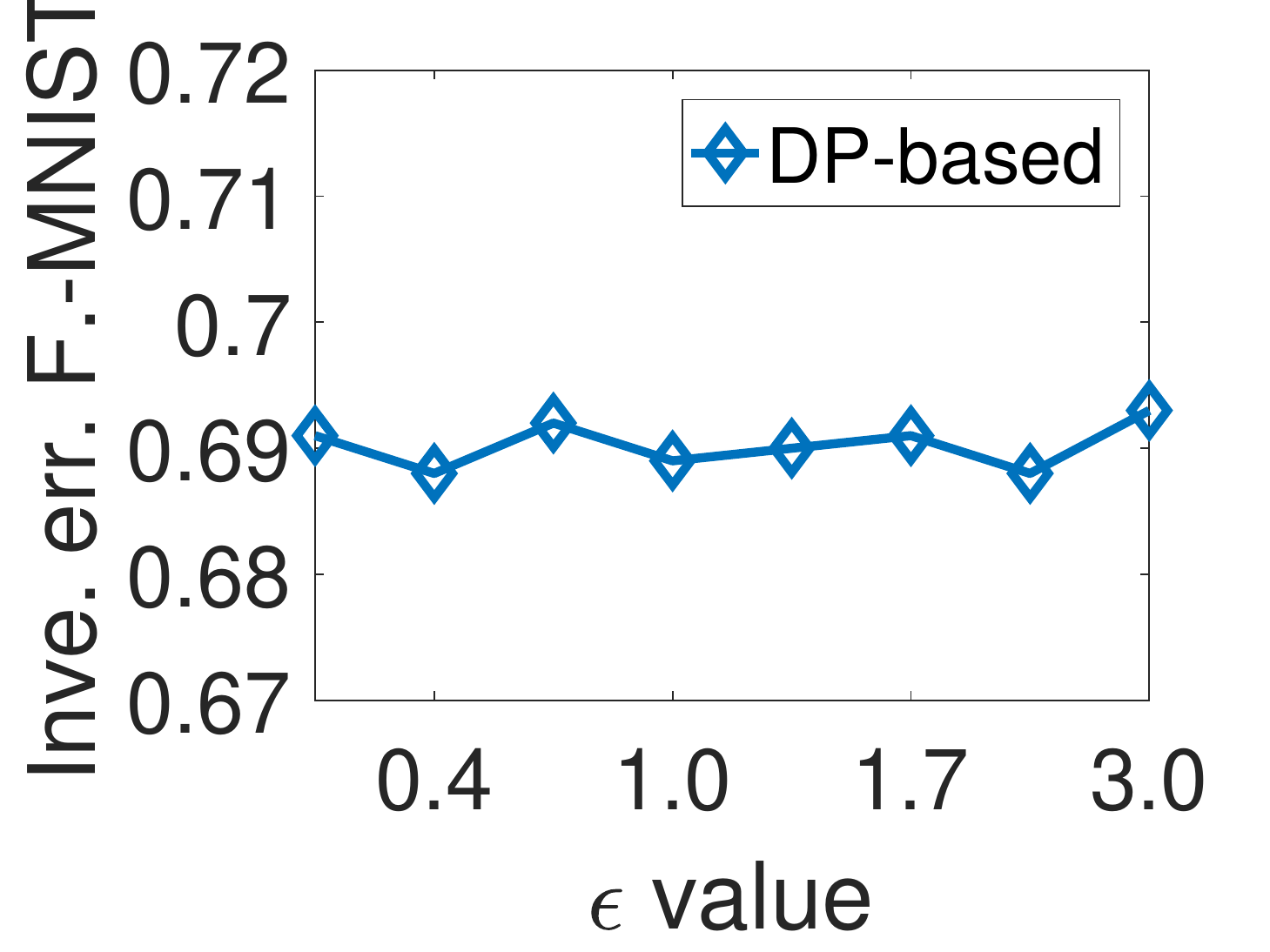}
			\label{fig:IE-FMNIST}}
	\subfigure[\scriptsize{Inversion error in CIFAR10}]{
    \includegraphics[scale=0.195]{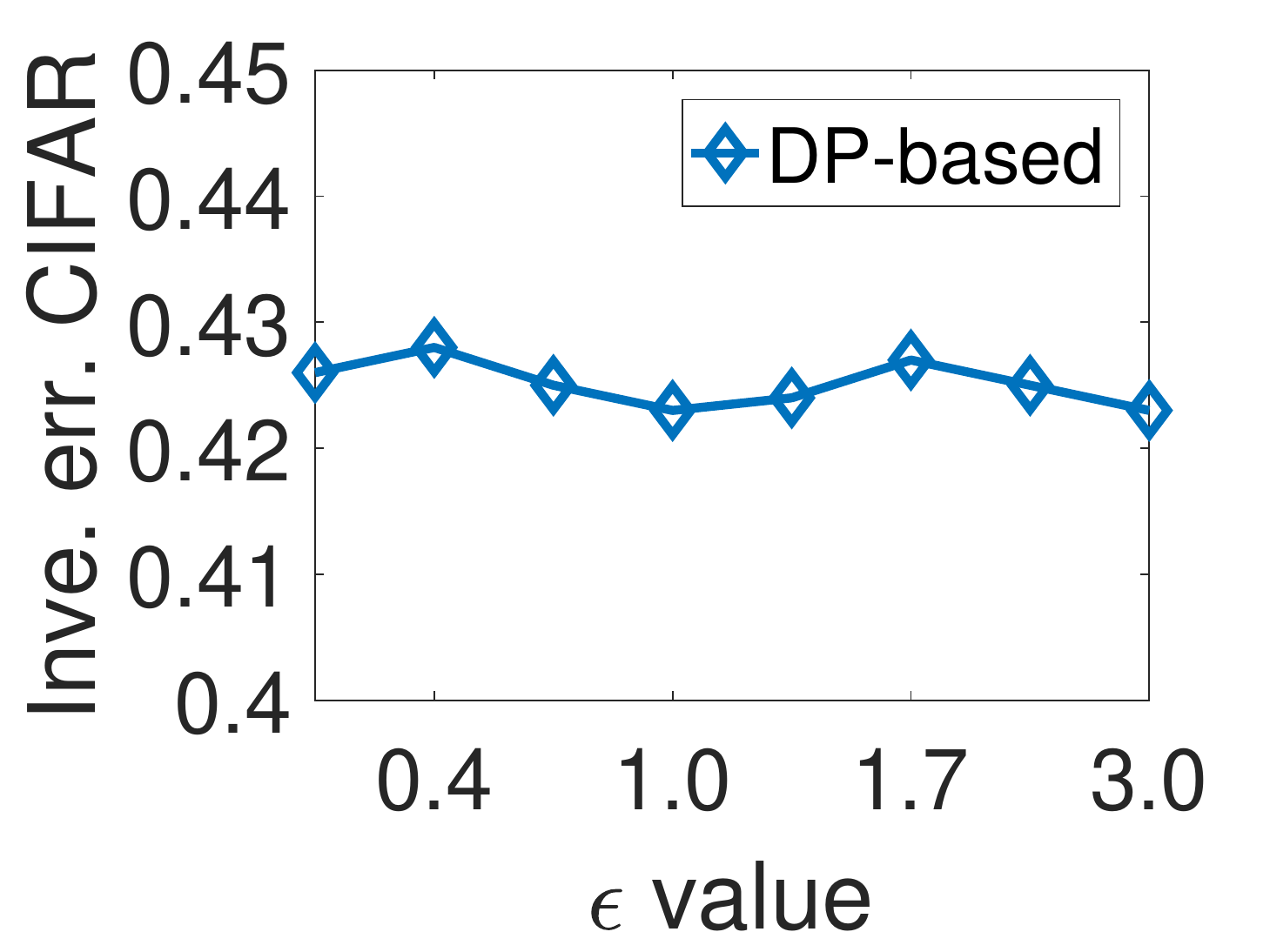}
			\label{fig:IE-CIFAR}}
    \end{minipage}
	\caption{Inversion error with varying $\epsilon$ values}
	\label{fig:IE}
\end{figure}

\subsubsection{The impact of different $\epsilon$ values on our method}
Figs. \ref{fig:CA}, \ref{fig:CSD}, \ref{fig:MIA} and \ref{fig:IE} 
demonstrate the impact of varying $\epsilon$ values 
across the five metrics on our method. 
We can see that as the $\epsilon$ value increases, 
classification accuracy remains the same, 
confidence score distortion decreases, 
membership inference accuracy rises, 
and inversion error stays mostly steady. 

In terms of classification accuracy, as explained above, 
since our method preserves the order of the scores in a confidence vector, 
classification accuracy is not affected by the value of $\epsilon$. 
For confidence score distortion and membership inference accuracy, 
as analyzed in Section \ref{sec:analysis}, when $\epsilon<\epsilon^*$, 
a larger $\epsilon$ value incurs a smaller confidence score distortion 
which leads to higher membership inference accuracy. 
Hence, the confidence score distortion does not affect the classification accuracy 
but it does have a huge impact on membership inference accuracy. 
Finally, the inversion error is not much affected by the value of $\epsilon$. 
This is because the inversion error is used as the loss function to train the attack model. 
Therefore, as long as the attack model converges, 
the inversion error will also converge to a relatively narrow range. 
It should be noted, however, that although different $\epsilon$ values yield almost the same inversion error, 
they can lead to very different model inversion results, 
once more demonstrating that inversion errors are 
not a critical determinant in the model inversion results. 

\section{Conclusion}\label{sec:conclusion}
In this paper, we proposed a differentially private and time-efficient defense method against 
both membership inference attacks and model inversion attacks. 
Our strategy is to use an exponential mechanism to modify and normalize 
the confidence score vectors to confuse the attacker's model. 
The experimental results show that this approach outperforms existing defense methods 
in various respects, especially, in terms of maintaining classification accuracy loss and not incurring training overhead. 
In future, we plan to extend our method to handle the attacks 
that only make use of labels \cite{Choo20,Li21}. 
A possible method is to modify the output label using a differential privacy mechanism, e.g., exponential mechanism. 
This, certainly, will introduce a classification error. 
However, as every target model has an intrinsic classification error when classifying a dataset, 
we need only to control the introduced classification error smaller than the intrinsic classification error 
by properly tuning the privacy budge $\epsilon$.
Another future work is using image-specific evaluation metrics in our experiments, 
e.g., SSIM, to measure the quality of reconstructed images. 



\section*{Acknowledgment}
This paper is supported by an ARC project, DP190100981, from the Australian Research Council, Australia. 
We also much appreciate the PhD candidate, Shuai Zhou, for his experimental support.

\bibliographystyle{IEEEtran}
{\small \bibliography{references}}


\end{document}